\theoremstyle{plain}
\newtheorem{theorem}{Theorem}[section]
\newtheorem{lemma}[theorem]{Lemma}
\theoremstyle{definition}
\newtheorem{definition}[theorem]{Definition}
\theoremstyle{remark}
\DeclareMathOperator{\eff}{eff}
\DeclareMathOperator{\Tr}{Tr}
\DeclareMathOperator{\Per}{Per}
\title{MG-Net: Learn to Customize QAOA with Circuit Depth Awareness}
\author{
Yang Qian$^{1,\dagger}$ \quad Xinbiao Wang$^{2,\ddag}$ \quad Yuxuan Du$^{3,\S}$ \thanks{Corresponding authors}  \quad Yong Luo$^{2,\P}$ \quad Dacheng Tao$^{1,3,\diamondsuit}$ \\
  $^{1}$School of Computer Science, Faculty of Engineering, University of Sydney \\
  New South Wales 2008, Australia \\
  $^{2}$Institute of Artificial Intelligence, School of Computer Science, Wuhan University \\
  Wuhan, China \\
  $^{3}$College of Computing and Data Science, Nanyang Technological University \\
  Singapore 639798, Singapore \\
  $^{\dagger}$\texttt{qianyang1217@gmail.com} \quad $^{\ddag}$\texttt{cyriewang@gmail.com} \quad $^{\S}$\texttt{duyuxuan123@gmail.com} \\
  $^{\P}$\texttt{luoyong@whu.edu.cn} \quad $^{\diamondsuit}$\texttt{dacheng.tao@ntu.edu.sg} 
}
\begin{document}

\maketitle

\begin{abstract}
  Quantum Approximate Optimization Algorithm (QAOA) and its variants exhibit immense potential in tackling combinatorial optimization challenges. However, their practical realization confronts a dilemma: the requisite circuit depth for satisfactory performance is problem-specific and often exceeds the maximum capability of current quantum devices. To address this dilemma, here we first analyze the convergence behavior of QAOA, uncovering the origins of this dilemma and elucidating the intricate relationship between the employed mixer Hamiltonian, the specific problem at hand, and the permissible maximum circuit depth. Harnessing this understanding, we introduce the Mixer Generator Network (MG-Net), a unified deep learning framework adept at dynamically formulating optimal mixer Hamiltonians tailored to distinct tasks and circuit depths. Systematic simulations, encompassing Ising models and weighted Max-Cut instances with up to 64 qubits, substantiate our theoretical findings, highlighting MG-Net's superior performance in terms of both approximation ratio and efficiency.
\end{abstract}

\section{Introduction}\label{sec:intro}

Combinatorial optimization problems (COPs) \cite{ausiello2012complexity}, central to numerous scientific and engineering disciplines \cite{commander2009maximum,jensen2011graph,hoffman2013traveling}, often defy efficient classical solutions due to their computational complexity \cite{papadimitriou1998combinatorial,karp2010reducibility}. A promising strategy to overcome these computational challenges involves harnessing the power of quantum computing, as these COPs can be mapped to Ising Hamiltonians whose ground states denote optimal solutions \cite{lucas2014ising,oh2019solving}. Leveraging this quantum representation, the Quantum Approximate Optimization Algorithm (QAOA) \cite{farhi2014quantum} has emerged to address these COPs. In particular, theoretical analyses  \cite{farhi1602quantum,lloyd2018quantum,morales2020universality,blekos2023review} underscore the potential of QAOA, suggesting its superiority over classical counterparts in certain contexts, particularly with unlimited infinite circuit depth. Meantime, empirical studies \cite{wang2018quantum,pagano2020quantum,zhou2020quantum} affirm its applicability across a diverse spectrum of problems and devices.

\begin{wrapfigure}{R}{0.4\textwidth}
    \begin{center}
     \includegraphics[width=0.4\textwidth]{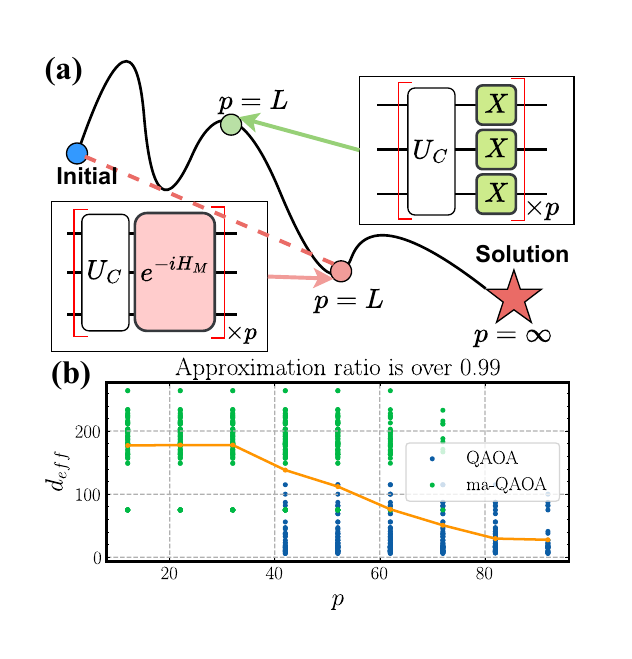}
        \caption{\small{\textbf{Mixer Hamiltonian affects the performance of QAOA}. (a) The optimization trajectories of QAOA with varied mixer Hamiltonians $H_M$. Given a fixed circuit depth $p$, a tailored $H_M$ (highlighted in pink) can more effectively steer the quantum state towards the exact solution compared to the original $H_M$ used in QAOA. (b) Transition of the effective dimension $d_{eff}$ required in QAOA with increasing $p$. `ma-QAOA' denotes a case with independent parameters \cite{herrman2022multi}, contrasted with `QAOA' where parameters are fully correlated. The orange line denotes the average effective dimension over all samples.}}
        \label{fig:opt-traj}
    \end{center}
\end{wrapfigure}

Despite these advancements, QAOA's practical efficacy is challenged by the quantum coherence limits of modern quantum devices, as there is a ceiling on the allowable maximum circuit depth $p$. As a result, standard QAOA often underperforms classical counterparts \cite{moll2018quantum,guerreschi2019qaoa}. This motivates a research shift towards redesigning the \textit{mixer Hamiltonian} $H_M$, a key component of QAOA. As illustrated in Fig.~\ref{fig:opt-traj}(a), supported by the results of quantum adiabatic evolution \cite{berry2009transitionless,guery2019shortcuts},  alternative $H_M$ may exist that guide the system along a more direct and efficient trajectory—a shortcut—to the solution state, leading to a better performance compared to the standard QAOA.  Besides, as shown in Fig.~\ref{fig:opt-traj}(b), empirical evidence indicates that the form of $H_M$ promising a good performance is varied with the allowable $p$. As such, diverse alternatives $H_M$ are proposed in past years, drawing upon concepts from quantum annealing \cite{yu2022quantum}, incorporating additional trainable parameters \cite{herrman2022multi} or exploiting permutation symmetry \cite{sauvage2022building}. However, these approaches require deep domain expertise and often lack generalizability across different tasks and circuit configurations $p$. 

In response to these challenges, here we first analyze the convergence of QAOA on various mixer Hamiltonian configurations and circuit depths with the tool of representation theory \cite{williams2002representation}. Our finding reveals that (i) the convergence of QAOA can be enhanced through parameter grouping in the mixer Hamiltonian; (ii) the specific strategy for parameter grouping is dependent on the particular problem and the value of $p$. These two findings are instrumental in understanding the interplay between $p$, parameter grouping, and the overall efficiency of the QAOA, providing valuable insights for the design of the mixer Hamiltonian.

Envisioned by the achieved theoretical results, we propose an end-to-end learning framework, termed \textbf{M}ixer \textbf{G}enerator \textbf{Net}work (MG-Net), to dynamically design the mixer Hamiltonian $H_M$ for a class of problems and distinct circuit depth constraints. Conceptually, MG-Net takes the problem's description and the available circuit depth $p$ as input and directly outputs the optimal mixer Hamiltonian for a $p$-QAOA. There are three distinguished features of our proposal: (i) The ability to dynamically adjust $H_M$ according to $p$, enhancing its \textit{compatibility} with practical quantum devices; (ii) Fast customization of $H_M$ for \textit{unseen problems} and circuit depth $p$, attributed to the multi-condition controlled generative network architecture; (iii) Circumvent the need for the expensive collection of a vast training dataset of optimal $H_M$ by employing an estimator-generator structure alongside a two-stage training approach. Note that the developed techniques can be flexibly extended to other variational quantum algorithms (VQAs) \cite{cerezo2021variational,qian2022dilemma}, which may be independent of interests.

The contributions of this paper are:\\
\noindent$\bullet$ We provide a rigorous theoretical analysis on the convergence of QAOA with sufficient circuit depth, elucidating \textbf{the link between the performance and the parameter grouping in QAOA circuits}. This analysis offers guidance on the design of mixer Hamiltonian to achieve a high approximation ratio for a specified circuit depth.\\
\noindent$\bullet$ We propose MG-Net, which dynamically tailors its predicted mixer Hamiltonian $H_M$ to suit the given problem and circuit depth. Our model \textbf{greatly reduces the cost of collecting labeled training data}, attributed to an estimator-generator framework and a two-stage training strategy.\\
\noindent$\bullet$ The proposed MG-Net demonstrates remarkable \textbf{generalization ability} from a limited dataset to a broad spectrum of combinatorial problems, which facilitates rapid and efficient creation of $H_M$ for unseen problems, advancing the \textbf{practical utility} of QAOAs.\\
\noindent$\bullet$ Extensive experiments on the Transverse-field Ising model and Max-Cut up to $64$ qubits \textbf{verify our theoretical discoveries} and \textbf{demonstrate the advantage of MG-Net in achieving higher approximation ratios at various circuit depths} compared to other quantum and traditional methods. The code is released at \url{https://github.com/QQQYang/MG-Net}.

\section{Background}

\subsection{Quantum approximation optimization algorithm}

Considering a COP defined on a set of $N$ binary variables $\bm{z}=z_1\cdots z_N$, where $z_i\in\{\pm 1\}$, our objective is to identify a bit string $\bm{z}$ that maximizes a specific objective function $C(\bm{z}):\{\pm 1\}^N\rightarrow \mathbb{R}_{\geq 0}$. Intuitively, the solution space grows exponentially with $N$, rendering the exact solution to many COPs intractable \cite{ausiello2012complexity}. In practice, an alternative approximation algorithm is selected to seek an approximate solution $\bm{z}$ to achieve a high approximation ratio $r=C(\bm{z})/C_{\max}$, where $C_{\max}=\max_{\bm{z}}C(\bm{z})$.

In response to this inherent complexity, Quantum Approximate Optimization Algorithm (QAOA) \cite{farhi2014quantum} is proposed. In this  framework, the bit string $\bm{z}$ is encoded into a quantum state $\ket{\bm{x}}=\ket{x_1\cdots x_N}$ with $x_i=(1-z_i)/2$, and the objective function $C(\bm{x})$ is encoded into the problem Hamiltonian $H_C\in \mathbb{C}^{2^N\times 2^N}$ so that $H_C\ket{\bm{x}}=C(\bm{x})\ket{\bm{x}}$. Refer to Appendix~\ref{sec:appdix_Opt_QNNs} for the omitted details.

QAOA is a hybrid quantum-classical algorithm that combines a parameterized quantum circuit (PQC) for state evolution and a classical optimizer for parameter updates. For a $p$-layer QAOA circuit shown in Fig.~\ref{fig:opt-traj}(a), the quantum state $\ket{\psi_p}$ is prepared by alternately applying the problem Hamiltonian $H_C$ and the mixer Hamiltonian $H_M=\sum_{i=1}^NX_i$ on the initial state $\ket{\psi_0}$, formulated as
\begin{equation}\label{eq:para_state}
    \ket{\psi_p(\bm{\alpha},\bm{\beta})}=\prod_{k=1}^p e^{-i\beta_k H_M}e^{-i\alpha_k H_C}\ket{\psi_0},
\end{equation}

where $\bm{\alpha}=(\alpha_1,...,\alpha_p)$ and $\bm{\beta}=(\beta_1,...,\beta_p)$ are $2p$ trainable parameters. These parameters are optimized to maximize the expectation value of the problem Hamiltonian $H_C$:
\begin{equation}\label{eqn:QAOA-cost}
    (\bm{\alpha}^*,\bm{\beta}^*)=\arg \max_{\bm{\alpha},\bm{\beta}} F_p(\bm{\alpha},\bm{\beta}),
\end{equation}

where $F_p(\bm{\alpha},\bm{\beta})=\braket{\psi_p(\bm{\alpha},\bm{\beta})|H_C|\psi_p(\bm{\alpha},\bm{\beta})}$ can be estimated by multiple measurements on the quantum system. As $F_p(\bm{\alpha}^*,\bm{\beta}^*)$ approaches the optimal value $C_{\max}$ of the objective function, we can obtain the approximate solution to the combinatorial optimization problem with high probability by measuring the state $\ket{\psi_p(\bm{\alpha}^*,\bm{\beta}^*)}$ in the computational basis. A metric for assessing the performance of QAOA is the approximation ratio $r=F_p(\bm{\alpha}^*,\bm{\beta}^*)/C_{\max}$.

\subsection{Symmetry in QAOA}
\noindent \textbf{Symmetry, ansatz design, and effective dimension}. A symmetry $S$ refers to the unitary operator leaving the operator $H$ invariant such that $S^{\dagger}HS=H$ (or $[S,H]=0$). All symmetries form a group $\mathcal{S}$ where given any two symmetries $S_1, S_2 \in \mathcal{S}$, the compositions $S_1S_2$ and $S_2S_1$ are also symmetries in $\mathcal{S}$. Among various symmetries, the most relevant one to our work is the permutation symmetry $\pi \in \mathcal{S}_N$, with the subscript being the qubit count $N$ and $\mathcal{S}_N$ being the symmetric group. For example, a permutation $\pi$ with $\pi(1)=3, \pi(2)=1, \pi(3)=2$ acting on the state $\ket{\psi_1}\ket{\psi_2}\ket{\psi_3}$ yields $\pi \ket{\psi_1}\ket{\psi_2}\ket{\psi_3}=\ket{\psi_3}\ket{\psi_1}\ket{\psi_2}$. Throughout the whole study, we denote the group of permutation symmetries of the problem Hamiltonian $H_C$ as 
$\Per(H_C)=\{\pi \in \mathcal{S}_N~|~\pi^{\dagger} H_C \pi =H_C \}.$

Consider an $N$-qubit PQC $U(\bm{\theta})=\prod_{j=1}^{p}\prod_{k=1}^{K}e^{-iH_k\bm{\theta}_{jk}} $ with $\bm{\theta} \in \Theta$ and $d=2^N$. We call $U(\bm{\theta})$ a symmetric PQC with respect to the problem Hamiltonian $H_C$ if there exists a symmetry group $\mathcal{S}$ of $H_C$ such that $[U(\bm{\theta}),S]=0$ for any $\bm{\theta}\in \Theta$ and $S \in \mathcal{S}$. This symmetry is determined by the generators of PQCs $\mathcal{A}=\{H_1, \cdots, H_K\}$ which is also called \textit{ansatz design}, as $[U(\bm{\theta}),S]=0$ holds for any $\bm{\theta}\in \Theta$ if and only if $[H_k, U(\bm{\theta})]=0$ for any $k\in[K]$. Such symmetry can be quantified by the \textit{effective dimension} \cite{you2022convergence,wang2022symmetric}.

\begin{definition}[Effective dimension]\label{def:eff_dim}
    Consider an $N$-qubit QAOA instance ($\ket{\psi_0},U(\bm{\theta}),H_C$) where $U(\bm{\theta})$ acts on the vector space $V$. If there exists a direct sum decomposition $V=\oplus_{j=1}^k V_j$ and $V^*\in\{V_j\}_{j=1}^k$ such that   $U(\bm{\theta})\ket{\psi_0}\in V^*$ for any $\bm{\theta}$ and the ground state of the problem Hamiltonian $\ket{\psi^*}$ satisfies $\ket{\psi^*}\in V^*$, then the effective dimension $d_{\eff}\leq 2^{N}$ is defined as the dimension of $V^*$.
\end{definition}
Experimental and theoretical analysis has shown that symmetric ansatz design with a small effective dimension contributes to better trainability \cite{larocca2022diagnosing,wang2022symmetric,larocca2023theory}. 

\noindent \textbf{Symmetry and ansatz designs in QAOA}. The PQC in  Eqn.~(\ref{eq:para_state}), adopted in the original QAOA, fully groups (FG) the trainable parameters and has the ansatz design $\mathcal{A}_{FG}=\{H_M, H_C\}$, which is symmetric with respect to $H_C$ under the \textit{permutation symmetry}. This is because its mixer Hamiltonian $H_M=\sum_{i=1}^N X_i$ is invariant under an arbitrary permutation operator. 

However, $\mathcal{A}_{FG}$ fails to employ the specific symmetry group of $H_C$. This issue can be addressed by partially grouping (PG) trainable parameters in QAOA. For example, denote $H_C=\sum_{(i_k,j_k)} Z_{i_k}Z_{j_k}$ with $Z_{j_k}$ being the Pauli-Z operator acting on the $j_k$-th qubit. An alternative symmetric ansatz design is $\mathcal{A}_{PG}=\{H_{\mathcal{O}_1}, \cdots, H_{\mathcal{O}_{|\mathcal{O}|}}, H_{\mathcal{O}_1^{e}}, \cdots, H_{\mathcal{O}_{|\mathcal{O}^e|}^{e}}\}$ where $H_{\mathcal{O}_k}=\sum_{i \in \mathcal{O}_k} X_i$ and $H_{\mathcal{O}_k^{e}}=\sum_{(i,j) \in \mathcal{O}_k^{e}} Z_iZ_j$ refer to the generators respecting the permutation symmetry of $H_C$ satisfying $H_M= \sum_{j=1}^{|\mathcal{O}|}H_{\mathcal{O}_j}$ and $H_C=\sum_{j=1}^{|\mathcal{O}^e|}H_{\mathcal{O}_j^e}$ \cite{sauvage2022building}. The ansatz design $\mathcal{A}_{PG}$ enables more free parameters than the ansatz design $\mathcal{A}_{FG}$ in each layer, and has been empirically shown with a faster convergence rate than $\mathcal{A}_{FG}$ given the same number of layers. 

When $H_C$ is asymmetric, another typical ansatz design in QAOA is $\mathcal{A}_{NG}=\{Z_{i_1}Z_{j_1}, \cdots, Z_{i_k}Z_{j_k}, X_1, \cdots, X_N\}$, where the parameters of all parameterized gates are independent and non-grouping (NG). Notably, the PQCs related to various ansatz design $\mathcal{A}_{FG},\mathcal{A}_{PG},\mathcal{A}_{NG}$ employ the same parameterized gates but with different \textit{parameter grouping strategies}, where $(\bm{\beta}, \bm{\alpha})$ in each layer can be fully grouped, partially grouped, and non-grouped \cite{sauvage2022building}.

\section{Convergence theory of QAOA}\label{sec:theory}
In this section, we theoretically illustrate how employing appropriate parameter grouping corresponds to better convergence performance. Similar to Refs.~\cite{you2022convergence} and~\cite{wang2022symmetric}, our derivations are based on the observation that the exploited PQC with highly-symmetric ansatz structure generally enables a faster convergence rate.

\begin{theorem}[Convergence]\label{thm:main_convergence}
    Consider a QAOA instance denoted as ($\ket{\psi_0},U(\bm{\theta}),H_C$) with $U(\bm{\theta})$ determined by the related ansatz design. Let $\mathcal{A}_{FG}, \mathcal{A}_{PG}, \mathcal{A}_{NG}$ be the ansatz designs of the circuits with parameters fully grouped, partially grouped, and no-grouped. Their effective dimension  yields
    \begin{equation}\label{eq:DLA_dim_relation_main}
        d_{\eff}(\mathcal{A}_{FG}) = d_{\eff}(\mathcal{A}_{PG}) \le d_{\eff}(\mathcal{A}_{NG}),
    \end{equation}
    where the equality in the inequality holds if there is no spatial symmetry in $H_C$. Besides, there exists a $d_{\eff}$-dependent threshold $C$ so that circuit depth $p>C$, the iterations $T$ required to achieve the same approximation ratio yield
    \begin{equation}\label{eq:Iter_steps_main}
        T_{PG} = T_{FG} \le T_{NG}.
    \end{equation}
\end{theorem}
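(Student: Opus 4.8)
The plan is to prove the two displayed relations in turn: first the effective-dimension chain Eqn.~(\ref{eq:DLA_dim_relation_main}) by a representation-theoretic argument, then the iteration-count chain Eqn.~(\ref{eq:Iter_steps_main}) by feeding the former into the convergence machinery of \cite{you2022convergence,wang2022symmetric}.

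For Eqn.~(\ref{eq:DLA_dim_relation_main}), the central fact I would establish is that $d_{\eff}$ is governed entirely by the symmetry group $G$ commuting with every generator of the ansatz: this group acts on $(\mathbb{C}^2)^{\otimes N}$ and decomposes it into isotypic components $V=\oplus_j V_j$; since each generator commutes with $G$, the reachable set $\{U(\bm{\theta})\ket{\psi_0}\}$ stays inside the component $V^*$ containing the symmetric initial state $\ket{\psi_0}$, and because $\ket{\psi^*}$ also lies in $V^*$, we have $d_{\eff}=\dim V^*$. I would then identify $G$ for each design. For $\mathcal{A}_{FG}=\{H_M,H_C\}$, a symmetry must commute with both $H_M=\sum_i X_i$ (invariant under all of $\mathcal{S}_N$) and $H_C$ (invariant under $\Per(H_C)$), so $G_{FG}=\mathcal{S}_N\cap\Per(H_C)=\Per(H_C)$.

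For $\mathcal{A}_{PG}$, every generator is an orbit sum under $\Per(H_C)$, so $\Per(H_C)\subseteq G_{PG}$ by construction; conversely, a permutation commuting with each edge generator $H_{\mathcal{O}_k^e}=\sum_{(i,j)\in\mathcal{O}_k^e}Z_iZ_j$ must map each edge orbit to itself, hence preserve the entire edge set and lie in $\Per(H_C)$, giving $G_{PG}=\Per(H_C)$. Equal symmetry groups yield identical isotypic decompositions and therefore $d_{\eff}(\mathcal{A}_{FG})=d_{\eff}(\mathcal{A}_{PG})$. For $\mathcal{A}_{NG}$, the individual single-qubit generators $X_1,\ldots,X_N$ force any symmetry to fix every site, so $G_{NG}\subseteq\Per(H_C)$ is generically strictly smaller; a smaller group produces a coarser decomposition whose relevant component $V^*$ is larger, yielding $d_{\eff}(\mathcal{A}_{FG})\le d_{\eff}(\mathcal{A}_{NG})$. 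When $H_C$ carries no spatial symmetry, $\Per(H_C)$ is already trivial, all three groups coincide, and the inequality collapses to equality.

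For Eqn.~(\ref{eq:Iter_steps_main}), I would invoke the overparameterization results of \cite{you2022convergence,wang2022symmetric}: once the depth exceeds a threshold $C$ that scales with $d_{\eff}$, the loss landscape restricted to the confining subspace becomes benign and the number of iterations to reach a fixed approximation ratio is controlled by $d_{\eff}$ alone, independent of the per-layer parameter count. Substituting the three effective dimensions then gives $T_{FG}=T_{PG}$ and $T_{PG}\le T_{NG}$. I expect the main obstacle to lie in two places. First, in Eqn.~(\ref{eq:DLA_dim_relation_main}), one must argue that $\mathcal{A}_{PG}$ does not acquire an \emph{accidental} symmetry strictly larger than $\Per(H_C)$—which would shrink its component below that of $\mathcal{A}_{FG}$; the edge-orbit argument above rules this out, but it relies on the orbits being exactly the $\Per(H_C)$-orbits and must be rechecked for weighted instances where edge weights further constrain admissible permutations. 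Second, the \emph{equality} $T_{FG}=T_{PG}$, as opposed to a mere asymptotic match, requires the cited convergence bounds to depend on the ansatz solely through $d_{\eff}$ above the threshold $C$, so the bulk of the work is verifying that the extra per-layer parameters of $\mathcal{A}_{PG}$ do not alter the convergence rate once the circuit is already overparameterized within $V^*$.
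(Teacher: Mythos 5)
Your proposal follows essentially the same route as the paper's appendix: there, too, the effective dimensions are ordered by comparing the commutants (symmetry algebras) of the three ansatz designs---showing that parameter grouping can only enlarge the commutant and hence shrink the invariant block containing $\ket{\psi_0}$, with $\mathcal{A}_{FG}$ and $\mathcal{A}_{PG}$ sharing the same symmetry while $\mathcal{A}_{NG}$'s individual generators admit fewer symmetries---and the resulting $d_{\eff}$ ordering is then fed into the $T_{\epsilon}=O(\log(d_{\eff}/\epsilon))$ convergence bound adapted from \cite{you2022convergence}, exactly as you propose. The only step you elide is the translation from the fidelity error $\epsilon=1-|\braket{\psi^{(t)}|\psi^*}|$ appearing in that bound to the approximation ratio $r$ of the theorem statement, which the paper supplies with a short eigenvalue estimate yielding $\epsilon\le 1-\sqrt{r}$; your remaining observations, including that $T_{FG}=T_{PG}$ rests on the convergence bound depending on the ansatz only through $d_{\eff}$ above the over-parameterization threshold, match the paper's argument.
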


The proof of Theorem~\ref{thm:main_convergence} and more elaborations are presented in Appendix~\ref{sec:proof}. The achieved results, combined with the over-parameterization theory of PQCs \cite{larocca2023theory}, deliver the following two implications. First, when the circuit depth $p>C$ is sufficiently large such that all PQCs with various ansatz designs reach the \textit{over-parameterization regime}, performing the parameter grouping can effectively decrease the effective dimension $d_{\eff}$ compared with the PQCs with no-parameter grouping, leading to a faster convergence rate. Second, the over-parameterization of QAOA occurs when the number of trainable parameters exceeds a critical point that is proportionally related to $d_{\eff}$. 

The above two implications indicate the selection of $\mathcal{A}_{FG}$, $\mathcal{A}_{PG}$, or $\mathcal{A}_{NG}$ is complicated and is both \textit{depth- and problem-dependent}. In particular, given a specified $p$, adopting a parameter grouping strategy can simultaneously reduce the number of parameters and the effective dimension, making it difficult to determine whether the QAOA reaches the over-parameterization regime. For instance, in a scenario such that the parameter grouping strategy drastically reduces the number of parameters but only slightly reduces the effective dimension, an over-parameterized QAOA could transform to an under-parameterized QAOA, leading to a degraded convergence as the optimization can be easily stuck in bad local minimal \cite{you2021exponentially,anschuetz2021critical}.

\section{MG-Net}\label{sec:mgnet}
The implication of  Theorem~\ref{thm:main_convergence}  inspires us to devise a method for dynamically generating an appropriate mixer Hamiltonian $H_M$ tailored to both the problem $G$ at hand and the specified circuit depth $p$. For this purpose, we harness the power of deep learning and devise an end-to-end learning framework, dubbed Mixer Generator Network (MG-Net).

\begin{figure*}[htp]
    \centering
    \includegraphics[width=1\linewidth]{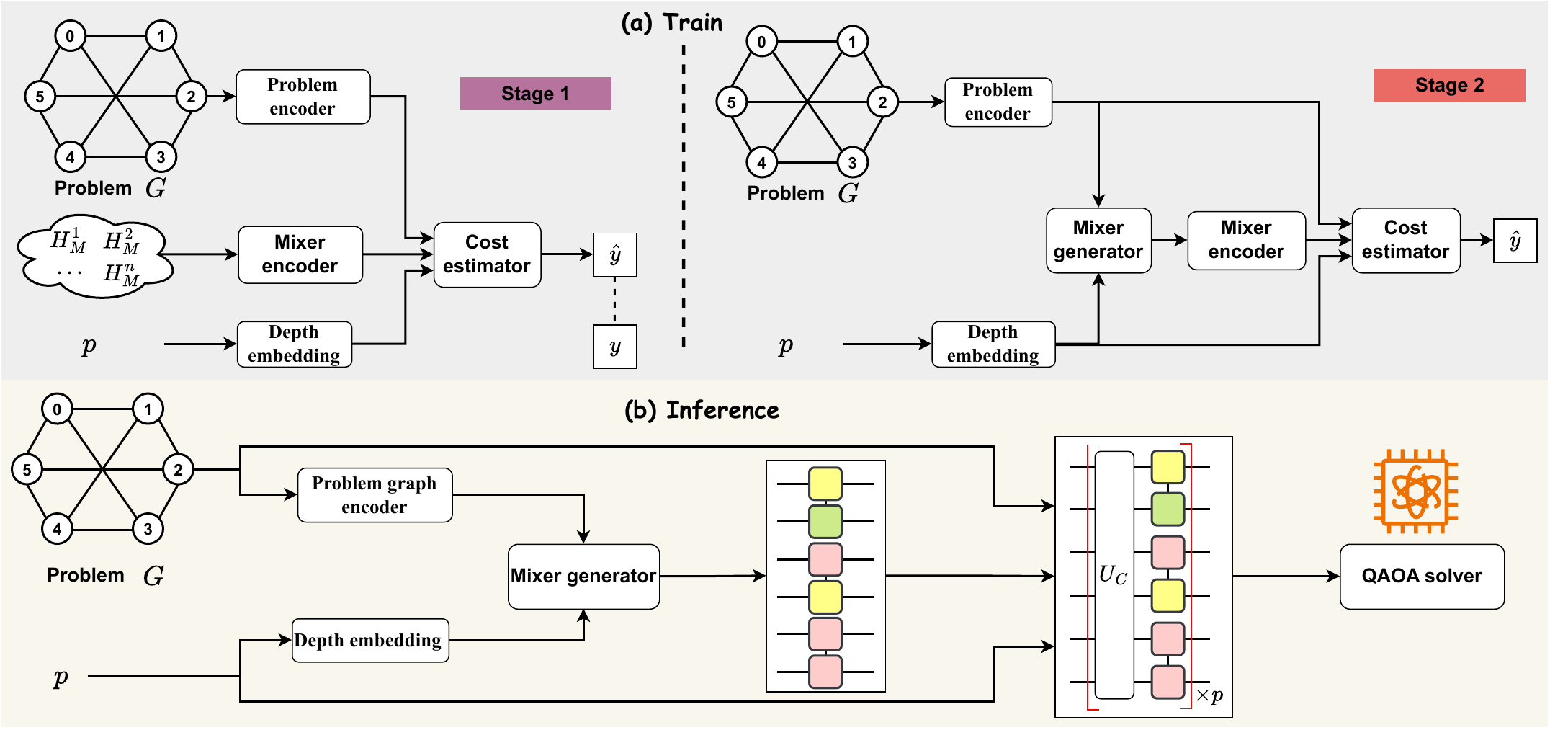}
    \caption{\small{\textbf{Framework of MG-Net.} (a) Training Phase. Initially (left), the cost estimator is trained to precisely predict QAOA performance for specific problem instances, circuit depths, and mixer Hamiltonians. In the subsequent stage (right), with the cost estimator fixed, the mixer generator is trained through unsupervised learning to derive the optimal mixer Hamiltonian that minimizes the cost estimator's output. (b) Inference Phase.  Given a problem $G$ and circuit depth $p$, the mixer generator produces a mixer Hamiltonian, subsequently utilized in a QAOA solver to find the solution.}}
    \label{fig:framework}
\end{figure*}

\subsection{Framework of MG-Net}\label{subsec:frame}
Before presenting the proposed MG-Net, let us first formalize the learning problem towards designing the mixer Hamiltonian $H_M$. To incorporate different Pauli operators and parameter grouping strategies, we extend the definition of an $N$-qubit mixer Hamiltonian $H_M$ in Eqn.~(\ref{eq:para_state}) to a more generalized form, supporting flexible operators and parameter correlations by substituting the Pauli-X operator with a selection of general Pauli operator and stratifying the $N$ operators into $K$ groups. Mathematically, the refined mixer Hamiltonian yields
\begin{equation}\label{eq:hm}
    H_M=\sum_{j=1}^K\beta_j\sum_{i\in \mathcal{G}_j}P_i,
\end{equation}
where $\beta_j$ refers to the trainable parameter controlling the $j$-th group of operators, $P_i\in \{X_i,Y_i\}$, and $\mathcal{G}_j$ contains the indices of operators belonging to the $j$-th group such that $\cup_{j=1}^K\mathcal{G}_j=[N]$ and $\mathcal{G}_i\cap\mathcal{G}_j=\emptyset$ for $\forall i\neq j$. In this sense, operators in the same group are correlated with each other, sharing the same parameter. In this way, \textit{the design of $H_M$ is decoupled into two distinct tasks: determine the parameter groups $\{\mathcal{G}_j\}_{j=1}^K$; identify the appropriate operator types $P_i$}. With the reformulation above, the decoupled tasks can be accomplished by learning a mapping rule $f: (G,p)\rightarrow (\mathcal{G},\mathcal{P})$ with $\mathcal{G}=\{\mathcal{G}_j\}_{j=1}^K$ and $\mathcal{P}\in \{X,Y\}^{\otimes N}$ referring to the parameter correlation and mixer Hamiltonian.

Designing a model to learn $f$ faces \textit{two main challenges}:\\ \textbf{(C-1)} The variety of combinatorial optimization tasks leads to uncertain input formats for the model, which  necessitates a universal representation method and retains essential properties of the original data, such as permutation invariance; \\
\textbf{(C-2)} The exponential growth of the search space for both parameter correlation and operator types, (i.e., scaling at $O(N^N)$ and $O(2^N)$, respectively), hurdles the design of an effective learning method. For instance, directing training a learning model in the supervised learning paradigm may require computationally unaffordable training examples to ensure good prediction accuracy.

We next present an end-to-end learning framework---\textbf{M}ixer \textbf{G}enerator \textbf{Net}work (MG-Net), as depicted in Fig.~\ref{fig:framework}, to address the above challenges. Particularly, to address  \textbf{C-1}, we devise a problem encoder which transforms each problem $G$ into a unified directed acyclic graph $G_C$, ensuring a consistent and effective input format. Coupled with the mixer encoder, it maps both the problem and mixer Hamiltonian to a shared hidden space. To address  \textbf{C-2}, MG-Net features a unique \textit{estimator-generator} framework, supplemented by a \textit{two-stage training strategy}. The role of these techniques is summarized below and their implementation details are demonstrated in the subsequent subsections. 
 
\noindent\textbf{Role of estimator}. Rather than directly seeking the optimal parameter correlation strategy $\mathcal{G}^*$ and operator type $ \mathcal{P}^*$ for a given $(G, p)$, we devise a \textit{cost estimator} to map the relationship between $(\mathcal{G}, \mathcal{P})$ and the achievable minimal cost $F_p$ of the corresponding QAOA in Eqn.~(\ref{eqn:QAOA-cost}). 

\noindent\textbf{Role of generator}. We devise a \textit{generator} to predict $(\mathcal{G}, \mathcal{P})$ that minimizes the cost estimator's output. This design requires only the cost of any mixer Hamiltonian as a label, thus avoiding the exhaustive search of optimal pairs $(\mathcal{G}^*,\mathcal{P}^*)$.

\textbf{Two-stage training}. The pipeline is visualized in Fig.~\ref{fig:framework}(a). \\ 
$\bullet$ \textbf{Stage 1 (Cost Estimator Training)}. This stage, marked in purple, focuses on training the cost estimator using \textit{supervised learning}. Inputs include the problem graph $G$, potential mixer Hamiltonians $H_M$, and the chosen circuit depth $p$, with the corresponding cost $y$ as the target label. \\ 
$\bullet$ \textbf{Stage 2 (Mixer Generator Training)}. This stage, marked in orange, freezes the cost estimator and only updates the mixer generator to minimize the output of the cost estimator under the \textit{unsupervised learning} paradigm.

For inference on unknown problem instances (in Fig.~\ref{fig:framework}(b)), MG-Net employs only the mixer generator to predict the optimal mixer Hamiltonian, which is then fed into a QAOA solver to derive the final solution. Distinguished by its ability to generalize effectively across a class of problems from a limited learning set, MG-Net sets itself apart from previous studies.
Refer to Appendix.~\ref{app:related_work} for discussion.
 
\subsection{Implementation of MG-Net}

\textbf{Data encoder in MG-Net.} MG-Net exploits three types of data encoder, i.e.,  the problem encoder, mixer encoder, and depth encoder, which maps the given problem $G$, the candidate mixer Hamiltonian $H_M$, and the specified depth $p$ to the same hidden feature space. The construction of these encoders is introduced below and the omitted details are deferred to Appendix \ref{app:implement_detail_de}.

\textbf{Cost estimator in MG-Net (Stage 1).} Recall Stage 1 in Sec.~\ref{subsec:frame}, the cost estimator takes the encoded problem graph $G_C$, the encoded mixer Hamiltonian $G_M$, and the encoded circuit depth $\bm{x}_p$ as inputs, and outputs the prediction of the achievable minimum loss of the corresponding QAOA. Each input is processed by an independent branch respectively: \textit{the problem graph branch, the mixer Hamiltonian branch, and the circuit depth branch}, as shown in Fig.~\ref{fig:net}(a). The concatenation of three types of features is subsequently utilized by a multi-layer perceptron (MLP) to output the minimum loss $\hat{y}$ that the QAOA ansatz can achieve. Refer to Appendix.~\ref{app:net_arch} for details. 

\textbf{Mixer generator in MG-Net (Stage 2).} The mixer generator in MG-Net takes $G_C$ and $\bm{x}_p$ as input and outputs a targeted mixer Hamiltonian $H_M$. Specifically, the mixer generation is composed of two separate sub-generators: the operator type generator and the parameter grouping generator defined in Eqn.~(\ref{eq:hm}), shown in Fig.~\ref{fig:net}(b). The operator type generator is responsible for generating operator types $\mathcal{P}$, which is conceptualized as a graph node classification task. The parameter grouping generator is responsible for predicting the sets of index groups $\{\mathcal{G}_j\}_{j=1}^K$ with an unspecified $K$, which is modeled as a link prediction task. Refer to Appendix.~\ref{app:net_arch} for details.

\begin{figure}[htp]
    \centering
    \includegraphics[width=0.9\linewidth]{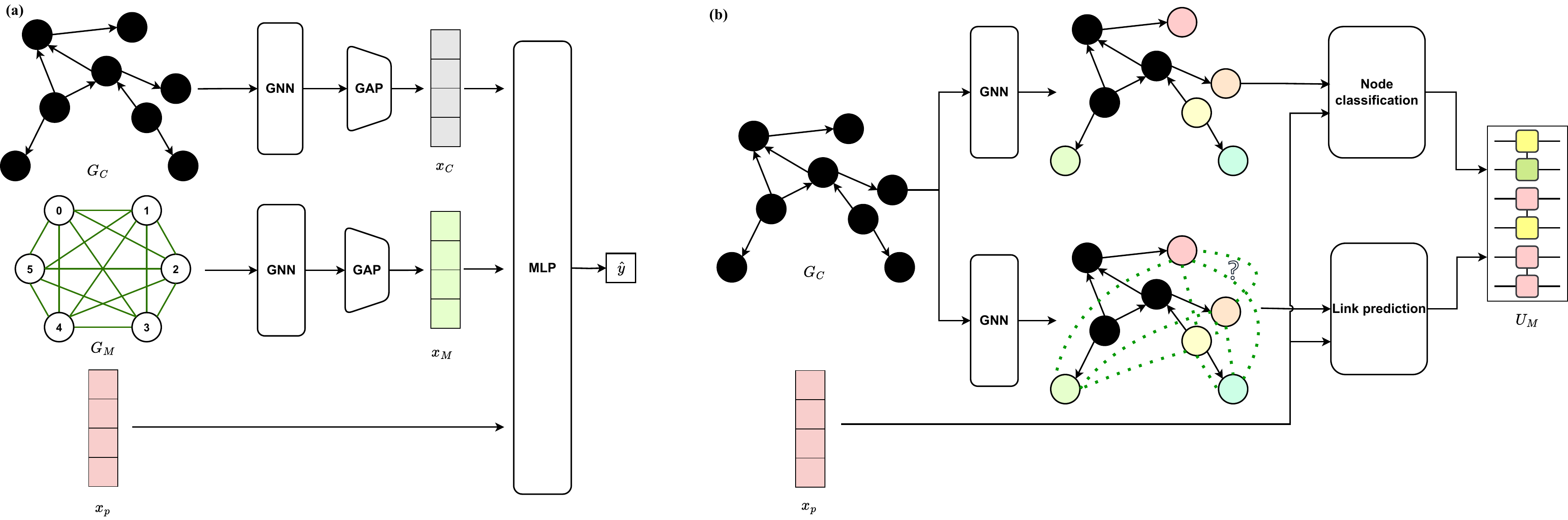}
    \caption{\small{\textbf{Structure of cost estimator and mixer generator.} (a) Cost estimator. The cost estimator is comprised of three distinct branches, each dedicated to processing different types of data: the original problem, the candidate mixer Hamiltonian, and the circuit depth. Their outputs are then integrated to predict the cost value achievable by the QAOA circuit. (b) Mixer generator. The mixer generation is divided into two distinct parts: operator type generation and parameter grouping generation. The former is executed as a node classification task, while the latter is approached as a link prediction task.}}
    \label{fig:net}
\end{figure}

\subsection{Training strategy}\label{subsec:train_strategy}

The training process of MG-Net is varied for the first and second stages, under supervised and unsupervised learning paradigms, respectively. 

\noindent\textbf{First-stage training}. This stage involves constructing a labeled dataset $\mathcal{D}_{\rm ce}^{\rm Tr}=\{(G_C^{(i)},G_M^{(i)},\bm{x}_p^{(i)}), y^{(i)}\}_{i=1}^S$, where the $i$-th sample consists of a tuple of features (i.e., the problem description $G_C^{(i)}$, the mixer $G_M^{(i)}$,  and the circuit depth feature $\bm{x}_p^{(i)}$), and the label $y^{(i)}$ representing the minimum cost value achievable by this QAOA instance (i.e.,  determined by repeatedly executing such a QAOA with varying initial parameters). Once $\mathcal{D}_{\rm ce}^{\rm Tr}$ is ready, the cost estimator is optimized by minimizing the loss function 
\begin{equation}\label{eq:loss-stage1}
    \mathcal{L}_{\rm ce}=\lambda_{e}\mathcal{L}_{e} + \lambda_{r}\mathcal{L}_{r},
\end{equation}
where  $\lambda_{e}\in [0,1]$ and $\lambda_{r}\in [0,1]$ are two hyper-parameters of each loss, $\mathcal{L}_{e}=\frac{1}{S}\sum_{i=1}^S(y^{(i)}-\hat{y}^{(i)})^2$ is the mean square error, and $\mathcal{L}_{r}$ is the ranking loss
\begin{equation}
    \mathcal{L}_{r}=\frac{1}{S^2-S}\sum_{i,j}^S\max (0,1-{\rm sign}(y^{(i)}-y^{(j)})(\hat{y}^{(i)}-\hat{y}^{(j)})). \nonumber
\end{equation}

\noindent\textbf{Second-stage training}. This stage involves the training of the mixer generator via unsupervised learning. The loss function of this stage is  
\begin{equation}
    \mathcal{L}_{\rm mg}=\frac{1}{S}\sum_{i=1}^SC(G_C^{(i)},M(G_C^{(i)},\bm{x}_p^{(i)}),\bm{x}_p^{(i)}),
\end{equation}
where $C(\cdot)$ and $M(\cdot)$ represent the output of the cost estimator and mixer generator, respectively. Note that only the parameters of the mixer generator are updated; the cost estimator parameters remain fixed to ensure consistent evaluation criteria throughout the whole learning process.

\section{Experiments}\label{sec:exp}
We evaluate the performance of MG-Net by two typical applications of QAOA: weighted Max-Cut and Transverse-field Ising model (TFIM), each of which is elucidated below.

\textbf{Weighted Max-Cut.} Denote a weighted graph  as $G=(V,E,W)$, where $V$ is the set of vertices of graph, $E$ is the set of graph edges, $W=\{w_{ij}\}_{(i,j)\in E}$ is the set of weights assigned to each edge. The problem Hamiltonian for the weighted Max-Cut problem is $H_C^{{\rm MaxCut}}=0.5*\sum_{(i,j)\in E}w_{ij}Z_iZ_j$,
where $Z_i$ is a Pauli-Z operator acting on the $i$-th qubit.

\textbf{TFIM.} Our focus is a class of inhomogeneous TFIMs: $H_C^{{\rm TFIM}}=-\sum_{(i,j)}J_{ij}Z_iZ_j-h\sum_i X_i$,
where $J_{ij}$ is the interaction strength between neighboring spins (or qubits) $(i,j)$, and $h$ signifies the strength of a global transverse field applied to each spin. In this model, the interaction strengths $J_{ij}$ can vary between different pairs of spins, adding a layer of complexity to the system.

\subsection{Experiment configuration}

\textbf{Dataset construction.}The Max-Cut problem focuses weighted 3-degree regular (w3r) graphs, where the edge weights $\{w_{ij}\}$ are uniformly sampled from $[0,1]$. The TFIM focuses on 1D instances where a qubit $i\in [N-1]$ has neighbors $i \pm 1 \pmod{N}$. The strength $J_{ij}$ and $h$ are uniformly sampled from $[0.5, 1.5]$ and $[0.1, 2]$ respectively. The training dataset $\mathcal{D}_{\rm ce}^{\rm Tr}$ in Sec.~\ref{subsec:train_strategy} contains $S=100$ instances for both two tasks with size up to $N=64$ qubits, while The test dataset $\mathcal{D}^{\rm Te}$ contains another $100$ problem instances which are different from that of $\mathcal{D}_{\rm ce}^{\rm Tr}$. Refer to Appendix~\ref{app:implement_detail_dc} for details.

\textbf{Optimization and training of MG-Net}. The cost estimator and mixer generator are trained using an Adam optimizer with a learning rate of $10^{-4}$, and hyper-parameters $\lambda_{e}=1$ and $\lambda_{r}=1$ in Eqn.~(\ref{eq:loss-stage1}). 

\textbf{Optimization of QAOA.} After predicting the problem-hardware-tailored mixer Hamiltonian $H_M$ by the trained mixer generator, a QAOA circuit with the initial state $\ket{+}^{\otimes N}$ and $H_M$ is optimized by an Adam optimizer with a learning rate of $0.15$. Each setting undergoes $10$ independent runs with varied random seeds and initial parameters to obtain the statistical results. Refer to Appendix~\ref{app:implement_detail_dc} for details.

\subsection{Results}

\textbf{Cost estimator acts as an accurate performance indication for QAOA}. The behavior of the cost estimator on the test dataset with varying circuit depths $p$ and two distinct parameter grouping strategies NG and FG (defined in Theorem \ref{thm:main_convergence}) is recorded in Fig.~\ref{fig:loss_pred_performance}. In Fig.~\ref{fig:loss_pred_performance}(a), we observed a strong correlation between the estimated and minimum cost values, and the correlation strength changes with $p$ and parameter grouping strategy. Particularly, the cost estimator predicts a high likelihood of finding the most accurate solution for QAOA circuits with FG parameters and a depth of $p=92$. This prediction aligns with the actual performance of QAOA under these specific conditions.

\begin{figure}[htp]
    \centering
    \includegraphics[width=1.0\linewidth]{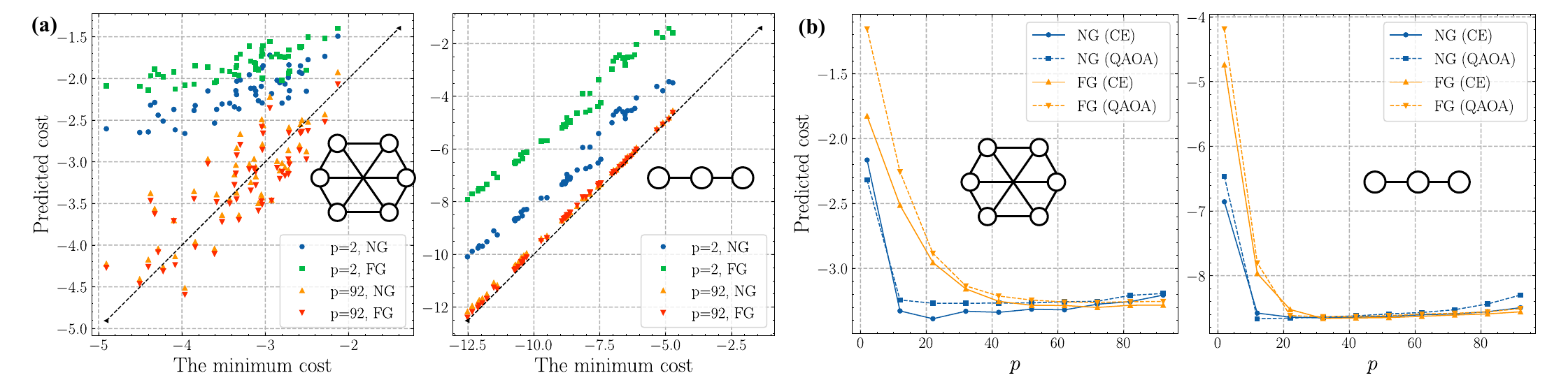}
    \caption{\small{\textbf{Behavior of cost estimator}. (a) The correlation between the estimated cost and the minimum cost for Max-Cut (left) and TFIM (right). Each point represents the result of a problem instance. The dashed line represents that QAOA can find the exact solution $y=x$. (b) The achievable cost under various circuit depth $p$ for Max-Cut (left) and TFIM (right). The label `CE' is the abbreviation of cost estimator. The dashed lines represent the cost achieved by QAOA, while the solid lines represent the cost estimated by our model.}}
    \label{fig:loss_pred_performance}
\end{figure}

We next focus on the behavior of the cost estimator concerning $p$ as shown in Fig.~\ref{fig:loss_pred_performance}(b). We note that for FG (standard QAOA), the estimated loss decreased monotonically with increasing $p$, aligning with standard QAOA's behavior. Under the NG scenario (multi-angle QAOA), a transition that QAOA performance begins to decline is observed when the circuit becomes excessively long ($p>42$). These results indicate the reliability of the cost estimator as a performance indicator for QAOA and reveal the complexities in QAOA performance under conditions of increased circuit length.

\textbf{Mixer generator}. We next evaluate the performance of the customized mixer Hamiltonian generated by MG-Net. As shown in Fig.~\ref{fig:mixer_gen_performance}(a), the number of trainable parameters $\#P$ of the generated quantum circuits aligns with the maximum in scenarios where all parameters are non-correlated (labeled as `NG') for smaller circuit depths $p<20$. This alignment indicates that MG-Net effectively enhances the expressibility of the QAOA ansatz for limited-depth circuits without significantly increasing the number of parameters, thereby avoiding potential trainability issues. As $p$ increases, a transition occurs. The growth rate of $\#P$ starts to decelerate, reaching a notable transition point at $p=62$ for Max-Cut ($p=52$ for TFIM). Beyond this threshold, the generated mixer Hamiltonians gradually converge towards the configuration seen in standard QAOA, with fully grouped parameters.

\begin{figure}[htp]
    \centering
    \includegraphics[width=1.0\linewidth]{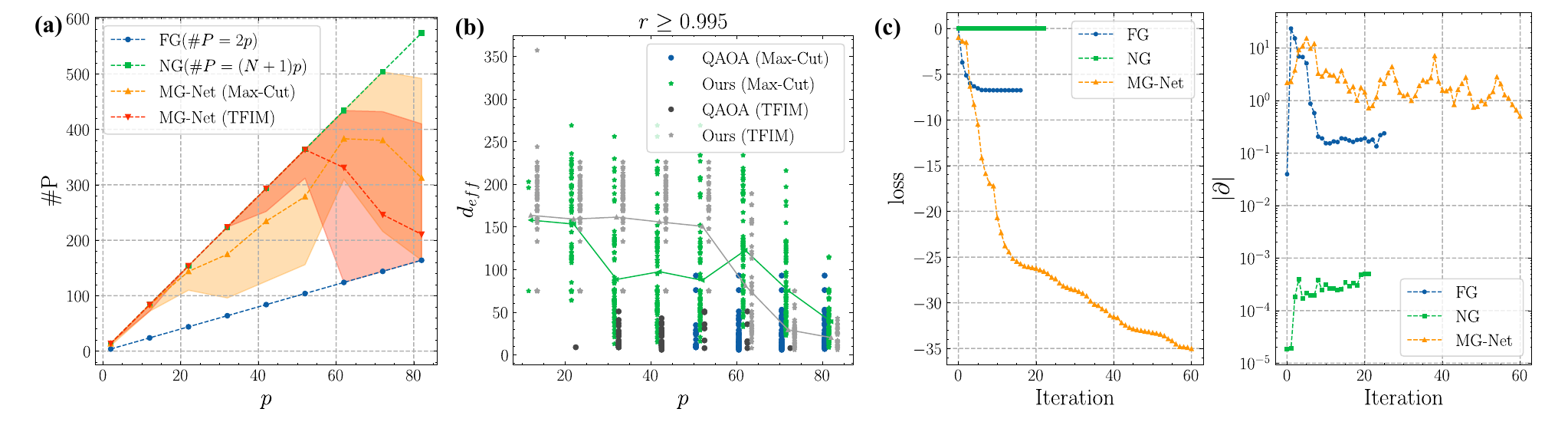}
    \caption{\small{\textbf{The trainability of the quantum circuits generated by MG-Net for Max-Cut and TFIM.} (a) The number $\#P$ of trainable parameters of the quantum circuits with mixer Hamiltonian predicted by MG-Net. (b) Comparison of the effective dimension $d_{\eff}$ of quantum circuits in standard QAOA and MG-Net driven QAOA (labeled as `Ours'). The green and grey solid lines denote the average effective dimension $d_{eff}$ of the predicted circuits that can achieve an approximation ratio over $0.995$ for Max-Cut and TFIM, respectively. It assesses circuits achieving an approximation ratio $r$ of at least $0.995$. (c) The convergence of QAOA with FG, NG and mixer Hamiltonian predicted by MG-Net for Max-Cut on $64$-node weighted graphs.}}
        \label{fig:mixer_gen_performance}
\end{figure}

Fig.~\ref{fig:mixer_gen_performance}(b) compares the effective dimension $d_{\eff}$ of quantum circuits achieving high approximation ratio $r\geq 0.995$ in standard QAOA and MG-Net driven QAOA. The results show that circuits generated by MG-Net achieve $r\geq 0.995$ across all values of $p$, even as low as  $p=2$, outperforming standard QAOA, which only reaches this level for $p>50$ for Max-Cut ($p>20$ for TFIM). Besides, the effective dimension of these high-quality quantum circuits gradually decreases with growing $p$, in line with the convergence analysis in Theorem~\ref{thm:main_convergence}. These findings suggest that MG-Net dynamically adjusts quantum circuits in response to changes in circuit depth $p$, thereby consistently ensuring high performance. 

Fig.~\ref{fig:mixer_gen_performance}(c) explicitly demonstrates the optimization behavior of 64-qubit QAOA with FG, NG and the mixer Hamiltonian predicted by our MG-Net. The left panel displays the loss curves during the optimization of quantum circuits with $p=2$, revealing that our method achieves the most rapid convergence. The right panel further explores the gradients of the three methods during optimization. Notably, the parameter gradient norm of our method maintains a trainable level of $1$, whereas the gradient for FG and NG falls to $10^{-1}$ and $10^{-4}$, respectively, compromising their trainability.

\textbf{Performance comparison}. In evaluating the effectiveness of our proposed method for solving Max-Cut problems, we conducted a comparative analysis against both classical and quantum algorithms. The benchmarks included the greedy algorithm, the Goemans-Williamson (GW) algorithm \cite{goemans1995improved}, alongside various quantum approaches such as QAOA, ADAPT-QAOA, and multi-angle QAOA (ma-QAOA). Our analysis, based on the average results from $100$ graphs in our test dataset, is summarized in Tab.~\ref{tab:ar}. The findings reveal that our method consistently outperforms other techniques in achieving a higher approximation ratio, particularly in larger-scale problems. Refer to Appendix~\ref{app:comp_tfim} for comparison results on TFIM. 

\textbf{More numerical results.} We have conducted additional analysis on the behavior of MG-Net and additional experiments on more tasks. Refer to \ref{app:more_result} for more details.

\begin{table}[htp]
    \centering
    \caption{\small{\textbf{Comparison of approximation ratio $r$ among different methods for Max-Cut.}}}
    \begin{tabular}{lccc}
        \toprule[1pt]
        Method & $6$ qubits & $16$ qubits & $64$ qubits  \\
        \midrule
        Greedy & $0.89\pm 0.104$ & $0.91\pm 0.047$ & 0.79 \\
        GW & $0.94\pm 0.074$ & $0.93\pm 0.052$ & 0.91\\
        \midrule
        QAOA & $0.93\pm 0.027$ & $0.35\pm 0.119$ & 0.19 \\
        ADAPT-QAOA & $0.75\pm 0.129$ & $0.58\pm 0.154$ & - \\
        ma-QAOA & $0.98\pm 0.004$ & $0.84\pm 0.129$ & 0.0 \\
        \midrule
        \textbf{Ours} & $\bm{0.99\pm 0.0004}$ & $\bm{0.95\pm 0.152}$ & \textbf{0.96} \\
        \bottomrule[1pt]
    \end{tabular}
    \label{tab:ar}
\end{table}

\section{Conclusion}\label{sec:discussion}
 
In this study, we analyze QAOA's convergence on varied mixer Hamiltonians, focusing on parameter grouping strategies. We introduce MG-Net for dynamically generating optimal mixer Hamiltonians for various problems and circuit depths. Numerical experiments on Max-Cut and TFIM confirm MG-Net's efficacy in enhancing QAOA's approximation ratio, particularly for large-scale problems, while ensuring low circuit complexity. This research advances the understanding and application of QAOA across various circuit depths.

Despite these promising outcomes, our work has several limitations that need to be addressed in future research. Firstly, training the cost estimator of MG-Net involves the construction of a labeled dataset $\mathcal{D}_{\rm ce}^{\rm Tr}$, which introduces additional resource consumption. Future work can focus on more efficient training algorithms. Additionally, our current approach is specifically designed for QAOA on early fault-tolerant devices, which limits the exploration of extending MG-Net to other quantum algorithms and noisy devices. Addressing these limitations will further enhance the robustness and scalability of MG-Net,  offering potential for broader use in VQAs.


\bibliographystyle{unsrtnat} 
\bibliography{ref.bib}


\appendix

\section{Optimization of QAOA}\label{sec:appdix_Opt_QNNs}
	In this section, we separately elaborate on the elementary notations in quantum computing, the preliminary of Hamiltonian, and the optimization strategy of QAOA.
	
	\textbf{Basics of quantum computation.} The elementary unit of quantum computation is qubit (or quantum bit), which is the quantum mechanical analog of a classical bit. A qubit is a two-level quantum-mechanical system described by a unit vector in the Hilbert space $\mathbb{C}^2$. In Dirac notation, a qubit state is defined as $\ket{\phi}=c_0\ket{0}+c_1\ket{1}\in \mathbb{C}^2$ where $\ket{0}=[1,0]^{\top}$ and $\ket{1}=[0,1]^T$ specify two unit bases and the coefficients $c_0,c_1\in\mathbb{C}$ yield $|c_0|^2+|c_1|^2=1$. Similarly, the \textit{quantum state} of $n$ qubits is defined as a unit vector in $\mathbb{C}^{2^n}$, i.e., $\ket{\psi}=\sum_{j=1}^{2^n}c_j\ket{e_j}$, where $\ket{e_j}\in \mathbb{R}^{2^n}$ is the computational basis whose $j$-th entry is $1$ and other entries are $0$, and $\sum_{j=1}^{2^n}|c_j|^2=1$ with $c_j \in \mathbb{C}$. Besides Dirac notation, the density matrix can be used to describe more general qubit states. For example, the density matrix of the state $\ket{\psi}$ is $\rho=\ket{\psi}\bra{\psi} \in \mathbb{C}^{2^n \times 2^n}$, where $\bra{\psi}=\ket{\psi}^{\dagger}$ refers to the complex conjugate transpose of $\ket{\psi}$. For a set of qubit states $\{p_j, \ket{\psi_j}\}_{j=1}^m$ with $p_j>0$, $\sum_{j=1}^m p_j=1$, and $\ket{\psi_j}\in \mathbb{C}^{2^n}$ for $j \in [m]$, its density matrix is $\rho=\sum_{j=1}^m p_j \rho_j$ with $\rho_j=\ket{\psi_j}\bra{\psi_j}$ and $\Tr(\rho)=1$.
	
	A \textit{quantum gate} is a unitary operator that can evolve a quantum state $\rho$ to another quantum state $\rho^{\prime}$. Namely, an $n$-qubit gate $U\in\mathcal{U}({2^n})$ obeys $UU^{\dagger}=U^{\dagger}U=I_{2^n}$, where $\mathcal{U}({2^n})$ refers to the unitary group in  dimension $2^n$. Typical single-qubit quantum gates include the Pauli gates, which can be written as Pauli matrices:
		\begin{equation}
			X = \left[ \begin{array}{ccc}
				0 & 1 \\
				1 & 0 \\
			\end{array}
			\right], \quad 
			Y = \left[ \begin{array}{ccc}
				0 & -i \\
				i & 0 \\
			\end{array}
			\right], \quad 
			Z = \left[ \begin{array}{ccc}
				1 & 0 \\
				0 & -1 \\
			\end{array}
			\right]. \quad  \label{eq:pauli}
		\end{equation}
		The more general quantum gates are their corresponding rotation gates $R_X(\theta)=e^{-i\frac{\theta}{2}X}, R_Y(\theta)=e^{-i\frac{\theta}{2}Y}$, and $R_Z(\theta)=e^{-i\frac{\theta}{2}Z}$ with a tunable parameter $\theta$, which can be written in the matrix form as
		\begin{equation}
			R_X(\theta)=\left[\begin{array}{cc}
				\cos \frac{\theta}{2} & -i \sin \frac{\theta}{2} \\
				-i \sin \frac{\theta}{2} & \cos \frac{\theta}{2}
			\end{array}\right], 
			R_Y(\theta)=\left[\begin{array}{cc}
				\cos \frac{\theta}{2} & -\sin \frac{\theta}{2} \\
				\sin \frac{\theta}{2} & \cos \frac{\theta}{2}
			\end{array}\right],  
			R_Z(\theta)=\left[\begin{array}{cc}
				e^{-i \frac{\theta}{2}} & 0 \\
				0 & e^{i \frac{\theta}{2}}
			\end{array}\right]. \label{eq:pauli_rot}
		\end{equation}
		They are equivalent to rotating a tunable angle $\theta$ around $x$, $y$, and $z$ axes of the Bloch sphere, and recovering the Pauli gates $X$, $Y$, and $Z$ when $\theta=\pi$. Moreover, a multi-qubit gate can be either an individual gate (e.g., CNOT gate) or a tensor product of multiple single-qubit gates.  
	
	The \textit{quantum measurement} refers to the procedure of extracting classical information from the quantum state. It is mathematically specified by a Hermitian matrix $H$ called the \textit{observable}. Applying the observable $H$ to the quantum state $\ket{\psi}$ yields a random variable whose expectation value is $\bra{\psi}H\ket{\psi}$. 
	
	\textbf{Hamiltonian and ground state}. 
	In quantum computation, a \textit{Hamiltonian} is a Hermitian matrix that is used to characterize the evolution of a quantum system or as an observable to extract the classical information from the quantum system. Specifically, under the Schr\"odinger equation, a quantum gate has the mathematical form of $U=e^{-itH}$, where $H$ is a Hermitian matrix, called the Hamiltonian of the quantum system, and $t$ refers to the evolution time of the Hamiltonian. Typical single-qubit Hamiltonians include the Pauli matrices defined in Eqn.~(\ref{eq:pauli}). As a result,  the evolution time $t$ refers to the tunable parameter $\theta$ in Eqn.~(\ref{eq:pauli_rot}). Any single-qubit Hamiltonian can be decomposed as the linear combination of Pauli matrices, i.e., $H=a_1I+a_2X+a_3Y+a_4Z$ with $a_j \in \mathbb{C}$. In the same way, a multi-qubit Hamiltonian is denoted by $H=\sum_{j=1}^{4^n}a_jP_j$, where $P_j\in\{I,X,Y,Z\}^{\otimes n}$ is the tensor product of Pauli matrices. In quantum chemistry and quantum many-body physics, the Hermitian matrix that describes the quantum system to be solved is denoted as the \textit{problem Hamiltonian} $H_C$. Within the context of QAOA, the information of the graph is encoded in the problem Hamiltonian, which is also called cost Hamiltonian. Another essential Hamiltonian in QAOA refers to the mixer Hamiltonian $H_M$, which is designed to facilitate transitions between different states (solutions), allowing the algorithm to explore the solution space. 
	
	When taking the problem Hamiltonian as the observable, the quantum state $\ket{\psi^*}$ is said to be the \textit{ground state} of problem Hamiltonian $H$ if the expectation value $\bra{\psi^*}H\ket{\psi^*}$ takes the minimum eigenvalue of $H$, which is called the \textit{ground energy}. The solution of the optimization problem is encoded in the ground state of the problem Hamiltonian.

	\textbf{Optimization of QAOA.} The loss function for QAOA with problem Hamiltonian $H_C$ is generally defined as 
    \begin{equation}\label{eq:qaoa_loss}
        \mathcal{L}(\bm{\theta}=(\bm{\alpha}, \bm{\beta}))=\braket{\psi_0|U(\bm{\theta})^\dagger H_C U(\bm{\theta})|\psi_0},
    \end{equation}
    where $U(\bm{\theta})$ refers to the parameterized unitary implemented on a quantum computer and $\ket{\psi_0}$ is an easily prepared state, which is generally set as the computational basis state $\ket{0^{\otimes n}}$.
    The optimization of the loss function  $\mathcal{L}(\bm{\theta})$ can be completed by gradient-based methods. A plethora of optimizers have been designed to estimate the optimal parameters $\bm{\theta}^*=\min_{\bm{\theta}} \mathcal{L}(\bm{\theta})$. Here we introduce the implementation of the first-order gradient-based optimizer for self-consistency. Refer to \cite{cerezo2021variational} for a comprehensive review. 
	
	Based on Eqn.~(\ref{eq:para_state}), the trainable parameters of QAOA are denoted by $\bm{\theta}=(\bm{\theta}_1^{\top}, \cdots, \bm{\theta}_L^{\top})^{\top}$ with $\bm{\theta}_{\ell}=(\theta_{\ell 1}, \cdots, \theta_{\ell K})^T$, where the subscript `$\ell k$' refers to the $k$-th parameter of the $\ell$-th layer $U_{\ell}$ for $\forall k\in[K]$ and $\forall \ell \in [L]$. The corresponding update rule at the $t$-th iteration $\forall t\in[T]$ is 
	\begin{eqnarray}
		&&\bm{\theta}^{(t+1)} \nonumber\\
		= && \bm{\theta}^{(t)}-\eta \frac{\partial \mathcal{L}(\bm{\theta}^{(t)})}{\partial \bm{\theta}} \nonumber\\
		= && \bm{\theta}^{(t)}-\eta \left(\bra{\psi_0}U(\bm{\theta}^{(t)})^{\dagger} H_C U(\bm{\theta}^{(t)})\ket{\psi_0}  - E_0 \right)\frac{\partial \left(\bra{\psi_0}U(\bm{\theta}^{(t)})^{\dagger} H_C U(\bm{\theta}^{(t)})\ket{\psi_0}  - E_0 \right)}{\partial \bm{\theta}}, \nonumber
	\end{eqnarray}
	where $\eta$ refers to the learning rate. The derivative in the last equality can be calculated via the parameter shift rule \cite{mitarai2018quantum}. Mathematically, the derivative with respect to the parameter ${\theta}_{\ell k}$ for $\forall \ell\in[L]$ and $\forall k\in[K]$ is 
	\begin{eqnarray}\label{eqn:para_shift}
		&& \frac{\partial \left(\bra{\psi_0}U(\bm{\theta})^{\dagger} H_C U(\bm{\theta})\ket{\psi_0}  - E_0 \right) } {\partial {\theta}_{\ell k}} \nonumber\\
		= && \frac{1}{2\sin \alpha} \big[\left(\bra{\psi_0}U(\bm{\theta}^+)^{\dagger} H_C U((\bm{\theta}^+)\ket{\psi_0}  - E_0 \right)  
		- \left(\bra{\psi_0}U((\bm{\theta}^-)^{\dagger} H_C U((\bm{\theta}^-)\ket{\psi_0}  - E_0 \right)\big]\nonumber,
	\end{eqnarray}
	where $ \bm{\theta}^+ = \bm{\theta} +  \alpha \bm{e}_{\ell k}$, $ \bm{\theta}^- = \bm{\theta} -  \alpha \bm{e}_{\ell k}$, $\bm{e}_{\ell k}$ is the unit vector along the $\theta_{\ell k}$ axis and $\alpha$ can be any real number but the multiple of $\pi$ because of the diverging denominator.

\section{Proof}\label{sec:proof}
The theoretical analysis of the convergence for symmetric QAOA is based on representation theory. In this regard, we first introduce the foundation of representation theory related to QAOA in Appendix~\ref{subsec:repre_qaoa}. The proof of  Theorem~\ref{thm:main_convergence} is elaborated in Appendix~\ref{subsec:proof_theorem}. 

\subsection{Representation theory in QAOA}\label{subsec:repre_qaoa}
In general, an instance of QAOA is specified by a triplet $(\ket{\psi_0}, U(\bm{\theta}), H)$, where $\ket{\psi_0}$ and $H$ refer to the initial state and problem Hamlitonian, and $U(\bm{\theta})$ refers to the parameterized quantum circuit (ansatz) with the form of
    \begin{equation}
        U(\bm{\theta}) = \prod_{j=1}^P\prod_{k=1}^K e^{-i\theta_{j,k}H_k},
\end{equation}
where $\bm{\theta}=(\bm{\theta}_{11},\cdots,\bm{\theta}_{1K}, \cdots, \bm{\theta}_{P1},\cdots,\bm{\theta}_{PK})\in\Theta \subseteq \mathbb{R}^{PK}$ is trainable parameters, $j$ is the index of layer, and $\mathcal{A}=\{H_k\}_{k=1}^{K}$ is set of Hermitian traceless operators called an ansatz design. The difference of ansatz originates from the varied $\Theta$ and $\mathcal{A}$. Given $\Theta$ and $\mathcal{A}$, a set of ansatz forms a subgroup of $SU(2^n)$ with $\mathcal{U}_{\mathcal{A}}=\cup_{L=0}^{\infty}\{U(\bm{\theta}):\bm{\theta}\in \Theta\}$, which can be characterized by dynamical Lie group with dynamical Lie algebra \cite{larocca2022diagnosing}

\begin{definition}[Dynamical Lie algebra and dynamical Lie group, \cite{larocca2022diagnosing}]
    Given an ansatz design $\mathcal{A}=\{H_1,\cdots,H_K\}$, the dynamical Lie algebra (DLA) $\mathfrak{g}$ is generated by the repeated nested commutators of elements in $\mathcal{A}$, i.e.,
    \begin{equation}
        \mathfrak{g}={\rm span}\braket{iH_1,...,iH_{K}}_{Lie},
    \end{equation}
    where $\braket{S}_{Lie}$ denotes the $Lie$ closure, i.e., the set obtained by repeatedly taking the nested commutators of the elements in $S$. The set of unitaries $\mathcal{U}_\mathcal{A}$ that can be generated by the ansatz design $\mathcal{A}$ is determined by its DLA through
    \begin{equation}
         \mathcal{U}_\mathcal{A}=e^{\mathfrak{g}}:=\{e^H, H\in \mathfrak{g}\}.
    \end{equation}
\end{definition}

Furthermore, the algebra structures of the ansatz design $\mathcal{A}$ can be characterized through the representation and the subrepresentation of Lie algebra on specific vector space.

\begin{definition}[Representation of Lie algebra] \label{def:rep_LA}
    Let $\mathfrak{g}$ be a Lie algebra on a finite-dimensional vector space $V$. A representation $r$ of $\mathfrak{g}$ acting on $V$ is a Lie algebra homomorphism $r:\mathfrak{g}\to\mathfrak{g}\mathfrak{l}(V)$, i.e., a linear map satisfying
    \begin{equation}
        r([X,Y])=[r(X),r(Y)], \quad  \mbox{for all~} X,Y\in \mathfrak{g}.
    \end{equation}
    The dimension of the representation $r$ is defined by $\dim(r) = \dim(V)$. If there exists a direct sum decomposition of $V$ into subspaces $V=V_1\oplus V_2 \oplus \cdots \oplus V_k$ such that $r(g)v_j\in V_j$ for any $v_j\in V_j$ and any $g\in \mathfrak{g}$, then $r_j:=r|_{V_j}$ is called the subrepresentation of $r$ on the vector space $V_j$. Moreover, $r_j$ is irreducible if there is no non-trivial invariant subspace of $V_j$. Then the representation of $\mathfrak{g}$ on the vector space $V=V_1\oplus V_2 \oplus \cdots \oplus V_k$ can be written as 
    \begin{equation}\label{eq:irrep}
        r(g)(v)=(r_1\oplus \cdots \oplus r_k(g))(v_1, \cdots, v_k)=(r_1(g)v_1, \cdots, r_k(g)v_k), \quad \mbox{for all~} g\in \mathfrak{g}, ~v\in V.
    \end{equation}
    The dimension of the representation with irreducible representation in Eqn.~(\ref{eq:irrep}) is $\dim(r)=\sum_{j=1}^k \dim(V_j)$
\end{definition}

The representation of DLA $\mathfrak{g}$ refers to the natural representation $r:\mathfrak{g}\to \mathfrak{g}$. In this regard, the dimension of DLA refers to $\dim(\mathfrak{g})=\dim(r)$. While the dimension of DLA is employed to characterize the threshold of over-parameterization \cite{larocca2023theory} and the barren plateau \cite{larocca2022diagnosing}, it does not take into account the symmetry structure of the ansatz and the initial state concerning the problem Hamiltonian. In particular, the symmetry operators of the DLA $\mathfrak{g}$ refer to unitary operators $S$ satisfying $SgS^{\dagger}=g$ for any $g\in \mathfrak{g}$, which is a subset of the commutant of $\mathfrak{g}$.

\begin{definition}[Commutant]
    Let $\mathfrak{g}$ be a matrix algebra. Its commutant is defined as $\mathcal{C}(\mathfrak{g}):=\{A: [A,g]=0, \forall g\in \mathfrak{g}\}$.
\end{definition}

We recall that the ansatz being symmetric with respect to the problem Hamiltonian means that there exists a symmetry group of the problem Hamiltonian $\mathcal{S}=\{S:S^{\dagger}H_CS=H_C\}$ such that $\mathcal{S}$ is also the symmetry group of the related DLA $\mathfrak{g}$, i.e., $\mathcal{S} \subseteq \mathcal{C}(\mathfrak{g})$. This indicates that the problem Hamiltonian and the ansatz design have the same block diagonalization structure \cite{schatzki2022theoretical}, namely the acting vector space $V=\oplus_{j=1}^k V_j$. Moreover, when there exists a subspace $V^* \in \{V_j\}_{j=1}^k$ such that the initial state lives in this space, then the optimization of the variational quantum state could be constrained into this subspace $V^*$ whose dimension refers to the effective dimension defined in Definition~\ref{def:eff_dim}. In this regard, the trainability of QAOA could be instead characterized by the effective dimension $d_{\eff}=\dim(V^*)$ \cite{wang2022symmetric,you2022convergence}. The relation between the effective dimension and the dimension of DLA is encapsulated in the following lemma.

\begin{lemma}[The relation between effective dimension and the dimension of DLA]\label{lem:eff_dla}
    Consider a QAOA instance ($\ket{\psi_0},U(\bm{\theta}),H_P$) with DLA $\mathfrak{g}$. If there exists an invariant subspace $V_{\mathfrak{g}}$ covering the initial state $\ket{\psi_0}$ and the solution state $\ket{\psi^*}=U(\bm{\theta}^*)\ket{\psi_0}$, then the effective dimension $d_{\eff}$ of this ansatz design $\mathcal{A}$ and the dimension of the corresponding DLA $\mathfrak{g}$ yields $d_{\eff}\le \dim(\mathfrak{g})$.
\end{lemma}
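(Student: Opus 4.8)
The plan is to realize the effective subspace as a subrepresentation of the DLA and then compare dimensions. First I would fix the invariant subspace $V^* := V_{\mathfrak{g}}$ furnished by the hypothesis, so that $gV^* \subseteq V^*$ for every $g\in\mathfrak{g}$. This yields a subrepresentation $r^*:\mathfrak{g}\to\mathfrak{gl}(V^*)$ given by $r^*(g)=g|_{V^*}$, whose image $\mathfrak{g}^*:=r^*(\mathfrak{g})$ is a Lie subalgebra of the anti-Hermitian operators on $V^*$. Since $r^*$ is linear over $\mathbb{R}$, we immediately have $\dim\mathfrak{g}^*\le\dim\mathfrak{g}$, so it suffices to prove $d_{\eff}=\dim V^*\le\dim\mathfrak{g}^*$, i.e., to bound the block dimension by the dimension of the algebra acting on that block.

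The heart of the argument is to show that, on the effective subspace, the reachable dynamics is rich enough that $\mathfrak{g}^*$ is the full special-unitary algebra $\mathfrak{su}(V^*)$ (equivalently, that the QAOA circuit is controllable within $V^*$). This is precisely the regime in which the effective dimension is the meaningful quantity: by construction $V^*$ is the minimal invariant block through which the optimization trajectory $U(\bm{\theta})\ket{\psi_0}$ passes and which also contains the solution state $\ket{\psi^*}=U(\bm{\theta}^*)\ket{\psi_0}$, so the span of the orbit $\{U\ket{\psi_0}:U\in e^{\mathfrak{g}}\}$ exhausts $V^*$. Combining this with the block decomposition induced by the commutant $\mathcal{C}(\mathfrak{g})$—which makes $\mathfrak{g}$ act irreducibly on $V^*$—and the over-parameterization structure, the restricted algebra fills $\mathfrak{su}(V^*)$, so that $\dim\mathfrak{g}^*=\dim\mathfrak{su}(V^*)=d_{\eff}^2-1$.

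With controllability in hand, a dimension count closes the argument: for $d_{\eff}\ge 2$ we have $\dim\mathfrak{g}\ge\dim\mathfrak{g}^*=d_{\eff}^2-1\ge d_{\eff}$, while the case $d_{\eff}=1$ is immediate since any nontrivial DLA has positive dimension. Hence $d_{\eff}\le\dim\mathfrak{g}$ in all cases.

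I expect the main obstacle to be the middle step—establishing that $\mathfrak{g}^*$ is all of $\mathfrak{su}(V^*)$ rather than a proper subalgebra. A naive irreducibility argument is not enough: $\mathfrak{su}(2)$ already admits an irreducible representation of dimension $4>3=\dim\mathfrak{su}(2)$ (the spin-$3/2$ block), so the claim would fail for a generic irreducible subrepresentation. The proof therefore genuinely needs the controllability assumption underlying the effective-dimension framework, together with care about isotypic multiplicities and about whether the acting algebra is $\mathfrak{u}(V^*)$ or $\mathfrak{su}(V^*)$. Making this step precise—by invoking the block structure dictated by $\mathcal{C}(\mathfrak{g})$ and the fact that the reachable orbit spans $V^*$—is where the real work lies; once it is secured, the remaining inequalities are routine.
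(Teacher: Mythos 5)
Your proposal takes a genuinely different route from the paper, and the route has a gap you yourself identify but do not close. The paper's own proof is a one-line dimension count: writing the carrier space as $V=\oplus_{j=1}^k V_j$ with $V^*$ one of the blocks, it observes $d_{\eff}=\dim(V^*)\le\max_j\dim(V_j)\le\sum_{j=1}^k\dim(V_j)$ and then invokes Lemma~\ref{lem:rep_sum_subrep} to identify $\sum_j\dim(V_j)=\dim(r)=\dim(\mathfrak{g})$. No restriction of the algebra to $V^*$, no controllability, and no quadratic lower bound $\dim\mathfrak{su}(V^*)=d_{\eff}^2-1$ appear anywhere; the entire content is pushed into the identity $\dim(\mathfrak{g})=\sum_j\dim(V_j)$ for the natural representation.

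The gap in your argument is the middle step, exactly where you flag it: you need $\mathfrak{g}^*:=\mathfrak{g}|_{V^*}$ to be all of $\mathfrak{su}(V^*)$, but this is not among the hypotheses of the lemma. The lemma assumes only that $V^*$ is an invariant subspace containing $\ket{\psi_0}$ and $\ket{\psi^*}$; it does not assume the ansatz is controllable on $V^*$, and your appeal to ``the over-parameterization structure'' and to the orbit of $\ket{\psi_0}$ spanning $V^*$ does not deliver it --- a cyclic vector gives you (at most) irreducibility of the block, and as your own spin-$3/2$ example shows, an irreducible block of dimension $4$ for $\mathfrak{g}\cong\mathfrak{su}(2)$ satisfies neither $\mathfrak{g}^*=\mathfrak{su}(V^*)$ nor, for that matter, the inequality $d_{\eff}\le\dim(\mathfrak{g})$ via your chain $d_{\eff}\le d_{\eff}^2-1=\dim\mathfrak{g}^*\le\dim\mathfrak{g}$. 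In other words, your proof strategy is only valid in the controllable regime, whereas the statement is being asserted under weaker hypotheses; to match the paper you would instead want to bound $\dim(V^*)$ by the total dimension $\sum_j\dim(V_j)$ of the decomposed carrier space and then use the paper's identification of that sum with $\dim(\mathfrak{g})$ (Lemma~\ref{lem:rep_sum_subrep}), which is where the paper locates all of the representation-theoretic content.
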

\begin{proof}[Proof of Lemma~\ref{lem:eff_dla}]
    The derivation of $d_{\eff}\le \dim(\mathfrak{g})$ could be directly obtained from the observation of $d_{\eff}\le\max_{j\in[k]}\dim(V_j)\le \sum_{j=1}^k V_j= \dim(\mathfrak{g})$.
\end{proof}

\subsection{Proof of Theorem~\ref{thm:main_convergence}}\label{subsec:proof_theorem}
The proof of Theorem~\ref{thm:main_convergence} employs the following lemmas, whose proofs are deferred to Appendix~\ref{subsec:proof_DLA_dim}.
\begin{lemma}[Convergence, adapted from Corollary 5.4 in \cite{you2022convergence}]\label{lem:convergence}
    Consider a QAOA instance denoted as ($\ket{\psi_0},U(\bm{\theta}),H_C$) with the effective dimension $d_{\eff}$. The unitary operator $U(\bm{\theta})$ follows the Haar distribution over special unitary matrices. Let $\ket{\psi^*}$ denote the solution state for problem Hamiltonin $H_C$ and $\ket{\psi^{(t)}}$ be the state at the $t$-th iteration. There exists an $d_{\eff}$-dependent over-parameter threshold $C(d_{\eff})$ and a $PK$-dependent learning rate $\eta(PK)$ so that if the number of the ansatz parameters $PK\geq C$, then with high probability, under gradient flow with learning rate $\eta(PK)$, the output state $\ket{\psi^{(t)}}$ converges to the solution state with error $\epsilon=1-|\braket{\psi^{(t)}|\psi^*}|$ after $T_{\epsilon}=O(\log \frac{d_{\eff}}{\epsilon})$ iterations.
\end{lemma}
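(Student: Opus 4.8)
The plan is to reduce the parameter-space optimization to an idealized gradient flow on the unit sphere of the effective subspace $V^*$, prove exponential contraction of the infidelity there, and let the Haar-random initialization supply the factor $d_{\eff}$ inside the logarithm. First I would invoke Definition~\ref{def:eff_dim}: every iterate $U(\bm{\theta})\ket{\psi_0}$ and the solution $\ket{\psi^*}$ lie in $V^*$ with $\dim(V^*)=d_{\eff}$, so the entire trajectory is confined to the unit sphere of $V^*$ and the loss $\mathcal{L}(\bm{\theta})=\braket{\psi_0|U(\bm{\theta})^\dagger H_C U(\bm{\theta})|\psi_0}$ is simply the energy of $H_C|_{V^*}$ over that sphere. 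The over-parameterization hypothesis $PK\ge C(d_{\eff})$, together with Lemma~\ref{lem:eff_dla}, is then used to guarantee that the restricted dynamical Lie algebra fills $\mathfrak{su}(d_{\eff})$, the threshold $C(d_{\eff})$ scaling like $\dim(\mathfrak{g})=O(d_{\eff}^2)$. The induced group action is transitive on the sphere, the reachable set of states is the whole sphere in $V^*$, and the landscape carries no spurious minima; this licenses analyzing the flow directly on the state sphere rather than on $\Theta$.

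Next I would establish the exponential rate. Writing the state in the eigenbasis of $H_C|_{V^*}$ with eigenvalues $\lambda_1<\lambda_2\le\cdots$, the Riemannian gradient flow of the energy amplifies the ground-state amplitude relative to the others, and a short amplitude-by-amplitude computation yields the gradient-dominance (Polyak--\L{}ojasiewicz) contraction $1-F(t)\le \tfrac{1-F(0)}{F(0)}\,e^{-2\Delta t}$ for $F(t)=|\braket{\psi^{(t)}|\psi^*}|^2$ and spectral gap $\Delta=\lambda_2-\lambda_1$. Choosing the learning rate $\eta(PK)$ small enough to keep the discrete gradient update a stable discretization of this flow (hence its dependence on $PK$) preserves the rate up to constants.

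Here the Haar assumption enters. Since $U(\bm{\theta})$ is Haar-distributed on the special unitary group restricted to $V^*$, the initial state is Haar-uniform on the sphere in $V^*$, so by concentration of measure $F(0)$ concentrates around $1/d_{\eff}$ with high probability, giving $\tfrac{1-F(0)}{F(0)}=O(d_{\eff})$ and hence $1-F(t)\le O(d_{\eff})\,e^{-2\Delta t}$. Because $1-|\braket{\psi^{(t)}|\psi^*}|\le 1-F(t)$, it suffices to drive $1-F(t)$ below $\epsilon$; solving $O(d_{\eff})\,e^{-2\Delta t}\le\epsilon$ for $t$ yields $T_\epsilon=\tfrac{1}{2\Delta}\log\tfrac{O(d_{\eff})}{\epsilon}=O(\log\tfrac{d_{\eff}}{\epsilon})$, with the gap $\Delta$ absorbed into the constant.

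I expect the main obstacle to be the rigorous passage from parameter-space descent to the idealized state-sphere flow, namely proving that in the over-parameterization regime the parameter gradient does not stall before the global minimum and that the effective landscape genuinely coincides with sphere-energy minimization. This is precisely the content adapted from Corollary~5.4 of \cite{you2022convergence}; the adaptation amounts to checking that their threshold and gradient-dominance bounds survive when the ambient dimension $2^N$ is replaced by $d_{\eff}$, which is legitimate because both the trajectory and the DLA action are block-diagonal and supported on $V^*$. A secondary subtlety is the high-probability control of $F(0)$ and of the Polyak--\L{}ojasiewicz constant under the Haar measure, which I would handle with standard concentration estimates on $SU(d_{\eff})$.
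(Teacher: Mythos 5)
First, note that the paper does not actually prove this lemma: it is imported essentially verbatim from Corollary 5.4 of \cite{you2022convergence}, with the ambient dimension $2^N$ replaced by the effective dimension $d_{\eff}$, so your attempt can only be compared against the cited source rather than an in-paper argument. Measured against that source, your reconstruction follows the same route: reduce to the Riemannian gradient flow of the energy on the state sphere of $V^*$, where the amplitude dynamics $c_i(t)\propto c_i(0)e^{-\lambda_i t}$ give exactly your contraction $1-F(t)\le \frac{1-F(0)}{F(0)}e^{-2\Delta t}$, and let the random initialization supply $F(0)\approx 1/d_{\eff}$, whence $T_\epsilon=O(\Delta^{-1}\log(d_{\eff}/\epsilon))$ (the gap $\Delta$ is silently absorbed into the $O(\cdot)$ of the lemma statement, as in your write-up, and your observation that $1-|\braket{\psi^{(t)}|\psi^*}|\le 1-F(t)$ correctly handles the error measure). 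Deferring the hard passage from parameter-space descent to the idealized sphere flow to the cited corollary is no worse than what the paper itself does, since the paper defers the entire lemma.

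Two points need correction. First, your claim that $PK\ge C(d_{\eff})$, together with Lemma~\ref{lem:eff_dla}, ``guarantees that the restricted dynamical Lie algebra fills $\mathfrak{su}(d_{\eff})$'' inverts the logic: the DLA is determined by the ansatz design $\mathcal{A}$ alone and does not grow with the number of parameters, so no parameter-count threshold can enlarge it. What over-parameterization buys in \cite{you2022convergence} and \cite{larocca2023theory} is that the map $\bm{\theta}\mapsto U(\bm{\theta})$ saturates the dynamical Lie group, so that parameter-space gradient descent tracks the Riemannian flow on the orbit of $\ket{\psi_0}$; transitivity on the sphere of $V^*$ and the absence of spurious minima must instead come from the lemma's Haar hypothesis (equivalently, from assuming the restricted DLA is already full), which is an \emph{assumption} of the statement, not a consequence of $PK\ge C$. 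Second, ``$F(0)$ concentrates around $1/d_{\eff}$'' is not the right probabilistic statement: for a Haar-random state the overlap is ${\rm Beta}(1,d_{\eff}-1)$-distributed, with mean $1/d_{\eff}$ but a substantial lower tail, so one only obtains $F(0)\ge \delta/d_{\eff}$ with probability roughly $1-\delta$. It is this lower-tail bound, not concentration, that underwrites the ``with high probability'' clause; fortunately the failure probability enters only inside the logarithm, so your final bound $T_\epsilon=O(\log(d_{\eff}/\epsilon))$ survives. Neither issue is fatal to the sketch, but both would need repair in a rigorous adaptation.
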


\begin{lemma}\label{lem:DLA_dim_relation}
    Let $\mathcal{A}_{FG}, \mathcal{A}_{PG}, \mathcal{A}_{NG}$ be the ansatz designs of the circuits with parameters fully grouping, partially grouping, no-grouping, then the effective dimension related to $\mathcal{A}_{FG}, \mathcal{A}_{PG}, \mathcal{A}_{NG}$ yields
    \begin{equation}\label{eq:DLA_dim_relation}
        d_{\eff}(\mathcal{A}_{FG}) = d_{\eff}(\mathcal{A}_{PG}) \le d_{\eff}(\mathcal{A}_{NG}),
    \end{equation}
    where the equality in the inequality holds if there is no permutation symmetry in the problem Hamiltonian.
\end{lemma}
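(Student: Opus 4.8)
The plan is to reduce the comparison of effective dimensions to a comparison of the dynamical Lie algebras (DLAs) generated by the three ansatz designs, and then to pin down the invariant subspace $V^*$ through the permutation symmetry that each DLA respects. Throughout I will invoke Lemma~\ref{lem:eff_dla} to pass between a DLA and its effective dimension, together with the block-diagonalization correspondence $\mathcal{S}\subseteq\mathcal{C}(\mathfrak{g})$ recalled in Appendix~\ref{subsec:repre_qaoa}.

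First I would establish the chain of DLA inclusions $\mathfrak{g}_{FG}\subseteq\mathfrak{g}_{PG}\subseteq\mathfrak{g}_{NG}$. The three designs are assembled from the \emph{same} elementary operators $\{X_i\}$ and $\{Z_{i_k}Z_{j_k}\}$, differing only in how finely these are grouped: each generator of $\mathcal{A}_{FG}$, namely $H_M=\sum_i X_i$ and $H_C=\sum Z_{i_k}Z_{j_k}$, is a linear combination of generators of $\mathcal{A}_{PG}$, and each $\mathcal{A}_{PG}$ generator $H_{\mathcal{O}_k}=\sum_{i\in\mathcal{O}_k}X_i$ is in turn a linear combination of the $\mathcal{A}_{NG}$ generators. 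Since a DLA is closed under linear combinations and nested commutators, the Lie closure of the coarser generating set is contained in that of the finer one. Because the minimal invariant subspace generated from $\ket{\psi_0}$ is monotone under DLA inclusion ($\mathfrak{g}_1\subseteq\mathfrak{g}_2$ forces $V^*_1\subseteq V^*_2$), this already delivers $d_{\eff}(\mathcal{A}_{FG})\le d_{\eff}(\mathcal{A}_{PG})\le d_{\eff}(\mathcal{A}_{NG})$.

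The substantive step is to upgrade the first inequality to equality. Here I would argue through the symmetry group rather than the DLA directly. Both $H_M$ (invariant under all of $\mathcal{S}_N$) and every partially grouped generator $H_{\mathcal{O}_k},H_{\mathcal{O}^e_k}$ (built from the vertex- and edge-orbits of $\Per(H_C)$) commute with each $\pi\in\Per(H_C)$, and conversely any permutation fixing every orbit sum must fix $H_C$; hence the permutation symmetry of both $\mathfrak{g}_{FG}$ and $\mathfrak{g}_{PG}$ is exactly $\Per(H_C)$, while the remaining non-permutation symmetries such as the parity $\prod_i X_i$ are shared by all three designs. Since $\ket{\psi_0}=\ket{+}^{\otimes N}$ and the ground state $\ket{\psi^*}$ sit in the same $\Per(H_C)$-isotypic sector, and since each DLA---under the over-parameterization/Haar hypothesis of Lemma~\ref{lem:convergence}, i.e. that the generated unitaries fill $SU(V^*)$ on every invariant block---acts transitively on that sector, the block $V^*$ of Definition~\ref{def:eff_dim} coincides with this sector for both designs. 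Therefore $V^*_{FG}=V^*_{PG}$ and $d_{\eff}(\mathcal{A}_{FG})=d_{\eff}(\mathcal{A}_{PG})$.

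I expect this transitivity/maximality claim to be the main obstacle: $\mathcal{A}_{FG}$ furnishes only two generators, so one must verify that their Lie closure still sweeps out the \emph{entire} $\Per(H_C)$-symmetric sector rather than some strictly smaller invariant subspace. I would discharge this using the double-commutant theorem together with known controllability results for transverse-field generators, showing that $\mathcal{C}(\mathfrak{g}_{FG})$ and $\mathcal{C}(\mathfrak{g}_{PG})$ induce the same block structure on the sector containing $\ket{\psi_0}$. Finally, for the equality condition of the second inequality I would observe that when $H_C$ carries no permutation symmetry, every orbit $\mathcal{O}_k,\mathcal{O}^e_k$ is a singleton, so $\mathcal{A}_{PG}$ reduces verbatim to $\mathcal{A}_{NG}$; hence $d_{\eff}(\mathcal{A}_{PG})=d_{\eff}(\mathcal{A}_{NG})$ and the entire chain collapses to equalities.
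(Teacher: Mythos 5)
Your proposal is correct in its overall strategy but travels the dual road to the paper's. The paper never works with the inclusion chain $\mathfrak{g}_{FG}\subseteq\mathfrak{g}_{PG}\subseteq\mathfrak{g}_{NG}$ directly; instead it passes entirely through commutants. Its Lemma~\ref{lem:DLA_comm_relation} establishes $\mathcal{C}(\mathfrak{g}_{NG})\subseteq\mathcal{C}(\mathfrak{g}_{PG})=\mathcal{C}(\mathfrak{g}_{FG})$ (the inclusion via the elementary observation that anything commuting with $H_1\otimes\mathbb{I}$ and $\mathbb{I}\otimes H_2$ separately commutes with their sum), and then Lemma~\ref{lem:comm_stru} converts a larger commutant into a finer block decomposition and hence a smaller effective dimension, delivering both the inequality and the equality in one uniform step. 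Your route---DLA inclusion plus monotonicity of the minimal invariant subspace containing $\ket{\psi_0}$---gets the two inequalities $d_{\eff}(\mathcal{A}_{FG})\le d_{\eff}(\mathcal{A}_{PG})\le d_{\eff}(\mathcal{A}_{NG})$ essentially for free and is arguably cleaner for that half; the price is that upgrading the first inequality to an equality becomes a separate, harder task, since $\mathfrak{g}_{FG}\subseteq\mathfrak{g}_{PG}$ only gives $V^*_{FG}\subseteq V^*_{PG}$. Your reduction of the no-permutation-symmetry case (all orbits are singletons, so $\mathcal{A}_{PG}$ coincides with $\mathcal{A}_{NG}$) matches the paper's.

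The one substantive caveat concerns exactly the step you flag yourself: your equality argument invokes transitivity of each DLA on the $\Per(H_C)$-isotypic sector, justified by appeal to the Haar/over-parameterization hypothesis of Lemma~\ref{lem:convergence}. That hypothesis is not part of the present lemma's statement, so importing it makes your equality conditional where the claim is unconditional. To be fair, the paper's own justification at the corresponding point---"any unitary commutes with the elements of $\mathfrak{g}_{FG}$ if and only if it commutes with the elements of $\mathfrak{g}_{PG}$"---is asserted rather than derived, and the forward direction is exactly the double-commutant/controllability fact you identify as the obstacle. So your proposal does not fall short of the paper's proof in rigor; it simply makes explicit a gap that the paper leaves implicit. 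If you want to close it along the paper's lines, prove $\mathcal{C}(\mathfrak{g}_{FG})=\mathcal{C}(\mathfrak{g}_{PG})$ directly and then read off the equality of block structures, rather than arguing about transitivity on the sector.
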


\begin{proof}[Proof of Theorem~\ref{thm:main_convergence}]
To obtain the ordering relation of the convergence rate of various ansatz designs, we first elucidate the relation between the convergence rate of the approximation ratio and the effective dimension. Consider the problem Hamiltonian $H_C=\sum_{(i,j)}Z_iZ_j \in \mathbb{C}^{d\times d}$ with $d=2^N$ and eigenvalues $\lambda_1\leq \lambda_2\cdots \leq \lambda_d$ and its corresponding eigenvector $\{\ket{\lambda_i}\}_{i=1}^d$. Preparing a quantum state $\ket{\psi}$ with overlap with the target ground state $\ket{\psi^*}$: $|\braket{\psi|\psi^*}|=1-\epsilon$, the lower bound of the expectation value of $\braket{\psi|H_C|\psi}$ is
\begin{align}
    \braket{\psi|H_C|\psi} &= \braket{\psi|\sum_{i=1}^d\lambda_i\ket{\lambda_i}\bra{\lambda_i}\psi}\\
    &=\lambda_1(1-\epsilon)^2+\sum_{i=2}^d\lambda_i|\braket{\psi|\lambda_i}|^2 \\
    &\leq \lambda_1(1-\epsilon)^2 + \lambda_d (1-(1-\epsilon)^2),
\end{align}
where the first inequality works by scaling each eigenvalue $\lambda_i$ to $\lambda_d$ and following the fact $\sum_{i=2}^d|\braket{\psi|\lambda_i}|^2 \le 1-(1-\epsilon)^2$. Then approximation ratio $r$ is
\begin{align}
    & r=\frac{\braket{\psi|H_C|\psi}}{\lambda_1}\geq \frac{\lambda_d}{\lambda_1}-\frac{\lambda_d-\lambda_1}{\lambda_1}(1-\epsilon)^2\ge (1-\varepsilon)^2, \nonumber \\
    \Longrightarrow \quad & \epsilon \le 1-\sqrt{r}
\end{align}
where the first inequality in the first equation holds because $\lambda_1 < 0$. Employing Lemma~\ref{lem:convergence}, we have that the output state $\ket{\psi^{(t)}}$ converges to the solution state with approximation ratio $r\ge |\braket{\psi^{(t)}|\psi^*}|^2$ after $T_r=O(\log(\frac{d_{\eff}}{1-\sqrt{r}}))$ iteration steps. These achieved results indicate that a small effective dimension leads to a faster convergence rate. In this regard, combining with Lemma~\ref{lem:DLA_dim_relation}, the convergence rate $T$ related to various ansatz $\mathcal{A}_{NG},\mathcal{A}_{PG},\mathcal{A}_{FG}$ for achieving the same approximation ratio yields $T_{FG}= T_{PG} \le T_{NG}$.
\end{proof}

\subsection{Proof of Lemma~\ref{lem:DLA_dim_relation}}\label{subsec:proof_DLA_dim} 
The proof of Lemma~\ref{lem:DLA_dim_relation} employs the following lemmas, where the proofs of Lemma~\ref{lem:rep_sum_subrep} and Lemma~\ref{lem:DLA_comm_relation} are deferred to Appendix~\ref{subsec:rep_sum_subrep} and Appendix ~\ref{subsec:DLA_comm_relation}.
\begin{lemma}\label{lem:rep_sum_subrep}
    Let $\mathfrak{g}$ be a dynamical Lie algebra and $r$ be the natural representation on the vector space $V$ satisfying $r(g)=g$ for any $g\in \mathfrak{g}$. If there exists irreducible subrepresentations of $r$ on $V$ such that $r(g)=r_1(g) \oplus \cdots \oplus r_k(g)$ acting on the space $V=V_1\oplus \cdots \oplus V_k$ for any $g\in \mathfrak{g}$, then the dimension of Lie algebra yields 
    \begin{equation}\label{eq:dim_sum_subrep}
\dim(\mathfrak{g})=\dim(r)=\sum_{j=1}^{k}\dim(r_j)=\sum_{j=1}^{k}\dim(V_j).
    \end{equation}
    where the dimension of subrepresentation $r_j$ refers to $\dim(r_j)=\dim(V_j)$.
\end{lemma}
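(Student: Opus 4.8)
The plan is to treat this statement as a bookkeeping identity: each equality in the displayed chain is either one of the dimension conventions fixed in Definition~\ref{def:rep_LA} and the surrounding discussion, or the additivity of vector-space dimension over an internal direct sum. So my strategy is simply to unpack every symbol into its definition and check that the decomposition hypothesis lets the dimensions add without overlap.

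First I would establish the outer equality $\dim(\mathfrak{g}) = \dim(r)$. This is immediate from the convention fixed for the natural representation of a DLA, namely $r:\mathfrak{g}\to\mathfrak{g}$ with $r(g)=g$, under which the dimension of the algebra is by definition read off from its natural representation. Definition~\ref{def:rep_LA} then supplies $\dim(r)=\dim(V)$, so it remains to show $\dim(V)=\sum_{j=1}^k\dim(V_j)$ and to match each summand with $\dim(r_j)$. The central step is the additivity of dimension: by hypothesis $r(g)=r_1(g)\oplus\cdots\oplus r_k(g)$ acts block-diagonally on the internal direct sum $V=V_1\oplus\cdots\oplus V_k$, so each $V_j$ is an $r(\mathfrak{g})$-invariant subspace, the $V_j$ intersect trivially, and together they span $V$. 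For an internal direct sum the dimensions add, giving $\dim(V)=\sum_{j=1}^k\dim(V_j)$. Finally the subrepresentation convention $\dim(r_j)=\dim(V_j)$ rewrites each summand as $\dim(r_j)$, and chaining the equalities yields $\dim(\mathfrak{g})=\dim(r)=\sum_{j=1}^k\dim(r_j)=\sum_{j=1}^k\dim(V_j)$.

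The only point that needs care — and the nearest thing to an obstacle — is justifying that the decomposition is a genuine internal direct sum into $\mathfrak{g}$-invariant blocks, so that no subspace is double-counted. Here that is supplied directly by the assumption that $r$ splits as $r_1\oplus\cdots\oplus r_k$; if one instead wished to produce the splitting rather than assume it, one would invoke complete reducibility of the natural representation of the (reductive) DLA and then isolate the irreducible blocks by Schur-type arguments. Since the statement already posits the existence of the irreducible decomposition, the genuine content is purely the direct-sum additivity of dimension, and I expect no substantive difficulty beyond keeping the three dimension conventions $\dim(\mathfrak{g})$, $\dim(r)$, and $\dim(r_j)$ carefully aligned.
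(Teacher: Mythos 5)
Your proposal is correct and follows essentially the same route as the paper's proof: both arguments reduce the claim to the definitional identities $\dim(\mathfrak{g})=\dim(r)$ (from the natural representation being the identity map on $\mathfrak{g}$) and $\dim(r_j)=\dim(V_j)$, together with additivity of dimension over the internal direct sum $V=V_1\oplus\cdots\oplus V_k$. Your added remark on complete reducibility is a reasonable aside but, as you note, unnecessary since the decomposition is hypothesized.
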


\begin{lemma}[Commutant structure~\cite{simon1996representations}]
    Let $r$ be a representation of a Lie algebra $\mathfrak{g}$ on the Hilbert space $\mathcal{H}$ and its decomposition into irreducible representation be
    \begin{equation}
        r(g)=\oplus_{j=1}^k \mathbb{I}_{m_j} \otimes r_j(g),
    \end{equation}
    where $m_j$ is known as the multiplicity of the irreducible representation $r_j$. Then the elements of its commutant are of the following form
    \begin{equation}
        \mathcal{C}(\mathfrak{g})=\oplus_{j=1}^k \mathcal{C}_{j}(\mathfrak{g})\otimes \mathbb{I}_{\dim(m_j)}, 
    \end{equation}
    where $\mathcal{C}_{j}(\mathfrak{g})$ denotes bounded operators in a $m_j$-dimensional Hilbert space. Then the dimension of representation $r$ and subrepresentation $r_j$ yields
    \begin{equation}
        \dim(r)=\dim(\mathcal{C}(\mathfrak{g})), ~ \mbox{and} ~ \dim(r_j)=\dim(\mathcal{C}_{j}(\mathfrak{g}))
    \end{equation}
    \label{lem:comm_stru}
\end{lemma}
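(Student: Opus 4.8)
The plan is to prove the chain in Eqn.~(\ref{eq:dim_sum_subrep}) by carefully unfolding the dimension conventions fixed in Definition~\ref{def:rep_LA} together with the paper's convention for the dimension of a DLA, so that the argument reduces to bookkeeping of subspace dimensions under the irreducible direct-sum decomposition. I would split the proof into steps matching the four quantities in the equation.

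First, the leftmost equality $\dim(\mathfrak{g})=\dim(r)$ holds by convention: the paper defines the dimension of the DLA to be the dimension of its natural representation $r$ (the map $r(g)=g$ for all $g\in\mathfrak{g}$), so this step is purely a matter of recalling that definition and noting that $r$ is the one representation entering the hypothesis. Next, for $\dim(r)=\sum_{j=1}^{k}\dim(V_j)$, I would first justify that the hypothesized decomposition is a genuine internal direct sum. Since $\mathfrak{g}\subseteq\mathfrak{su}(2^N)$ consists of anti-Hermitian operators, $r$ is completely reducible: if $W$ is $r$-invariant then its orthogonal complement $W^\perp$ is invariant as well, because $\langle w, g w^\perp\rangle=-\langle g w, w^\perp\rangle=0$ for $g\in\mathfrak{g}$, $w\in W$, $w^\perp\in W^\perp$. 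Hence $V=V_1\oplus\cdots\oplus V_k$ as an orthogonal direct sum of the irreducible invariant subspaces carrying $r_1,\dots,r_k$, and additivity of dimension over a direct sum gives $\dim(r)=\dim(V)=\sum_j\dim(V_j)$, which is exactly the dimension formula recorded at the end of Definition~\ref{def:rep_LA}.

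The remaining equalities are immediate from the definition of a subrepresentation in Definition~\ref{def:rep_LA}: each $r_j:=r|_{V_j}$ has $\dim(r_j)=\dim(V_j)$ by construction, which supplies both the parenthetical identity and the rightmost equality $\sum_j\dim(r_j)=\sum_j\dim(V_j)$. Chaining the steps then closes $\dim(\mathfrak{g})=\dim(r)=\sum_j\dim(r_j)=\sum_j\dim(V_j)$. As an independent consistency check I can reproduce the middle equality from Lemma~\ref{lem:comm_stru}, which gives $\dim(r)=\dim(\mathcal{C}(\mathfrak{g}))$; using the block form $\mathcal{C}(\mathfrak{g})=\oplus_j \mathcal{C}_j(\mathfrak{g})\otimes\mathbb{I}$ together with the injectivity of $A\mapsto A\otimes\mathbb{I}$ yields $\dim(\mathcal{C}(\mathfrak{g}))=\sum_j\dim(\mathcal{C}_j(\mathfrak{g}))=\sum_j\dim(r_j)$, matching the direct-sum count.

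Since every equality is either a recorded convention or an immediate consequence of dimension-additivity, I expect no serious analytic obstacle. The only genuine content is the appeal to complete reducibility, which guarantees that $V$ is literally the internal direct sum of the irreducible blocks so that dimensions add exactly, and the bookkeeping of multiplicities: when several $V_j$ carry equivalent irreducibles they must still each be counted, but because the hypothesis writes $r(g)=r_1(g)\oplus\cdots\oplus r_k(g)$ block by block the sum already ranges over all $k$ blocks with multiplicity, so no correction is needed. I anticipate that stating this multiplicity bookkeeping cleanly, and reconciling it with the commutant structure of Lemma~\ref{lem:comm_stru}, will be the most delicate part of the write-up rather than any deep step.
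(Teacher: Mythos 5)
Your proposal does not prove the statement at hand. The target is Lemma~\ref{lem:comm_stru}, the commutant-structure result: that every element of $\mathcal{C}(\mathfrak{g})$ has the isotypic block form $\oplus_{j=1}^k \mathcal{C}_j(\mathfrak{g})\otimes\mathbb{I}$, together with the attendant dimension identities. What you actually prove is the dimension chain of Eqn.~(\ref{eq:dim_sum_subrep}), i.e.\ Lemma~\ref{lem:rep_sum_subrep} --- a different lemma whose content is mere additivity of dimension over a direct sum. The only place your write-up touches Lemma~\ref{lem:comm_stru} is the final ``consistency check,'' where you \emph{assume} its conclusion (the block form of the commutant) to re-derive a dimension count; as a proof of the lemma itself this is circular. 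The genuine mathematical content you would need, and which never appears in your argument, is Schur's lemma applied twice: first, that a nonzero intertwiner between inequivalent irreducibles vanishes, so any $A\in\mathcal{C}(\mathfrak{g})$ preserves each isotypic component $\mathbb{C}^{m_j}\otimes V_j$; second, that an intertwiner of an irreducible with itself is a scalar, so within each isotypic block $A$ must take the form $B_j\otimes\mathbb{I}_{\dim(r_j)}$ with $B_j$ an arbitrary operator on the $m_j$-dimensional multiplicity space. This is the standard double-commutant computation; the paper itself does not reproduce it but cites \cite{simon1996representations}, so a blind proof of this statement is expected to supply exactly that argument.

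Two of your ingredients are worth salvaging but do not close the gap. Your complete-reducibility observation (anti-Hermitian generators imply the orthogonal complement of an invariant subspace is invariant) correctly justifies that the decomposition in the hypothesis exists, but says nothing about which operators commute with it. And your final dimension bookkeeping, $\dim(\mathcal{C}(\mathfrak{g}))=\sum_j\dim(\mathcal{C}_j(\mathfrak{g}))$, is fine as far as it goes, but identifying $\dim(\mathcal{C}_j(\mathfrak{g}))$ with $\dim(r_j)$ is precisely one of the claims to be established (and, read literally as vector-space dimensions, it requires the lemma's specific convention relating multiplicities $m_j$ to the blocks $\mathcal{C}_j(\mathfrak{g})$, since the commutant block has dimension governed by $m_j$ rather than by $\dim(V_j)$ in general). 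So even the check is not independent evidence: it inherits the unproven structure theorem. To repair the attempt, replace the dimension-chain argument with the two-step Schur analysis above and then read off the dimension identities from the resulting block form.
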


\begin{lemma}
    \label{lem:DLA_comm_relation}
    Let $\mathfrak{g}_{FG},\mathfrak{g}_{PG},\mathfrak{g}_{NG}$ be the Lie algebra related to the ansatz designs of the circuits with parameters fully grouping $\mathcal{A}_{FG}$, partially grouping $\mathcal{A}_{PG}$, no-grouping $\mathcal{A}_{NG}$. Then the related commutants of the three Lie algebras yield
    \begin{equation}\label{eq:DLA_comm_relation}
        \mathcal{C}(\mathfrak{g}_{NG})\subseteq\mathcal{C}(\mathfrak{g}_{FG}) = \mathcal{C}(\mathfrak{g}_{PG}),
    \end{equation}
    where the equality in the subset holds if there is no spatial symmetry in the problem Hamiltonian.
\end{lemma}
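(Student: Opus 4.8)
The plan is to route everything through the correspondence between an ansatz design, its dynamical Lie algebra (DLA), and the commutant of that DLA, exploiting that $\mathfrak{g}\mapsto\mathcal{C}(\mathfrak{g})$ is inclusion-reversing. First I would record the generator-level inclusions. A DLA is closed under linear combinations, and each coarser generator is a linear combination of the finer ones: $H_M=\sum_j H_{\mathcal{O}_j}$ and $H_C=\sum_j H_{\mathcal{O}_j^e}$ write the FG generators as sums of the PG generators, while each PG generator $H_{\mathcal{O}_k}=\sum_{i\in\mathcal{O}_k}X_i$ and $H_{\mathcal{O}_k^e}=\sum_{(i,j)\in\mathcal{O}_k^e}Z_iZ_j$ is a sum of NG generators. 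Hence $\mathfrak{g}_{FG}\subseteq\mathfrak{g}_{PG}\subseteq\mathfrak{g}_{NG}$, and passing to commutants reverses the chain to $\mathcal{C}(\mathfrak{g}_{NG})\subseteq\mathcal{C}(\mathfrak{g}_{PG})\subseteq\mathcal{C}(\mathfrak{g}_{FG})$, which already delivers the stated inclusion $\mathcal{C}(\mathfrak{g}_{NG})\subseteq\mathcal{C}(\mathfrak{g}_{FG})$. The substance is the middle equality.

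For $\mathcal{C}(\mathfrak{g}_{FG})=\mathcal{C}(\mathfrak{g}_{PG})$ I would evaluate both sides against the commutant $\mathcal{C}(\Per(H_C))$ of the permutation-symmetry operators. Since an operator commutes with a Lie algebra iff it commutes with the generators (products inherit commutation through the Leibniz rule), $\mathcal{C}(\mathfrak{g}_{FG})=\mathcal{C}(\langle H_M,H_C\rangle_{\mathrm{alg}})$ and $\mathcal{C}(\mathfrak{g}_{PG})=\mathcal{C}(\langle\{H_{\mathcal{O}_k}\}\cup\{H_{\mathcal{O}_k^e}\}\rangle_{\mathrm{alg}})$, where $\langle\cdot\rangle_{\mathrm{alg}}$ is the generated unital associative algebra (automatically self-adjoint for Hermitian generators). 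Both $H_M,H_C$ and all orbit sums are invariant under $\Per(H_C)$, so every element of $\Per(H_C)$ commutes with them; as $\mathcal{C}(\mathfrak{g}_{PG})$ is an algebra, this yields the lower bound $\langle\Per(H_C)\rangle_{\mathrm{alg}}\subseteq\mathcal{C}(\mathfrak{g}_{PG})$. The key claim I would then establish is the matching upper bound for FG, namely the universality statement $\langle H_M,H_C\rangle_{\mathrm{alg}}=\mathcal{C}(\Per(H_C))$: the standard QAOA generators already generate the full symmetric commutant. Granting this, the von Neumann double-commutant theorem gives $\mathcal{C}(\mathfrak{g}_{FG})=\mathcal{C}(\mathcal{C}(\Per(H_C)))=\langle\Per(H_C)\rangle_{\mathrm{alg}}$, and the squeeze $\langle\Per(H_C)\rangle_{\mathrm{alg}}\subseteq\mathcal{C}(\mathfrak{g}_{PG})\subseteq\mathcal{C}(\mathfrak{g}_{FG})=\langle\Per(H_C)\rangle_{\mathrm{alg}}$ forces the equality.

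A useful consistency check that refining FG to PG loses no symmetry is that the permutation operators in either commutant are exactly $\Per(H_C)$: a permutation commutes with every $H_{\mathcal{O}_k^e}$ iff it preserves each edge orbit, and since the edge orbits partition the edge set and carry a common weight, this is equivalent to fixing $H_C$, i.e.\ to membership in $\Per(H_C)$ (and $H_M$, being fully symmetric, imposes nothing further). The main obstacle is the universality claim $\langle H_M,H_C\rangle_{\mathrm{alg}}=\mathcal{C}(\Per(H_C))$, equivalently that $\mathfrak{g}_{FG}$ carries no symmetry beyond $\Per(H_C)$. I expect this to require a controllability / no-accidental-symmetry argument—propagating single-qubit terms across the interaction graph via nested commutators $[H_C,H_M],[H_C,[H_C,H_M]],\dots$ and products $H_C^m$ to show the generated algebra fills the entire symmetric commutant—and it is the step most likely to need a connectivity or genericity hypothesis on the problem graph.

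Finally, the conditional clause follows immediately from the middle equality. If $H_C$ has no spatial symmetry then $\Per(H_C)$ is trivial, every vertex and edge orbit is a singleton, the PG and NG designs coincide, and $\mathcal{C}(\mathfrak{g}_{PG})=\mathcal{C}(\mathfrak{g}_{NG})$; combined with $\mathcal{C}(\mathfrak{g}_{FG})=\mathcal{C}(\mathfrak{g}_{PG})$ this collapses the chain to $\mathcal{C}(\mathfrak{g}_{NG})=\mathcal{C}(\mathfrak{g}_{FG})$, as claimed. Consistently, universality then reads $\langle H_M,H_C\rangle_{\mathrm{alg}}=\mathcal{C}(\{I\})$, the full matrix algebra, so all three commutants reduce to the scalars.
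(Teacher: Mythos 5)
Your inclusion chain is sound and is essentially the paper's own argument in different clothing: the paper proves directly that any operator commuting with $H_1\otimes\mathbb{I}$ and $\mathbb{I}\otimes H_2$ separately commutes with their sum, giving $\mathcal{C}(\mathfrak{g}_{NG})\subseteq\mathcal{C}(\mathfrak{g}_{PG})\subseteq\mathcal{C}(\mathfrak{g}_{FG})$; your route via $\mathfrak{g}_{FG}\subseteq\mathfrak{g}_{PG}\subseteq\mathfrak{g}_{NG}$ (coarser generators are linear combinations of finer ones, and Lie algebras are closed under linear combination) followed by inclusion-reversal of the commutant carries the same content. Your treatment of the no-symmetry clause is also fine, and in fact cleaner than the paper's: when every orbit is a singleton, $\mathcal{A}_{PG}$ and $\mathcal{A}_{NG}$ coincide by definition, so the chain collapses once the middle equality is in hand.

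The genuine gap is the middle equality $\mathcal{C}(\mathfrak{g}_{FG})=\mathcal{C}(\mathfrak{g}_{PG})$, which you reduce to the universality claim $\langle H_M,H_C\rangle_{\mathrm{alg}}=\mathcal{C}(\Per(H_C))$ and then explicitly leave open. This is not a cosmetic omission: by the double-commutant theorem the direction you need, $\mathcal{C}(\mathfrak{g}_{FG})\subseteq\mathcal{C}(\mathfrak{g}_{PG})$, is equivalent to showing that each orbit sum $H_{\mathcal{O}_k}$ and $H_{\mathcal{O}_k^e}$ lies in the associative algebra generated by $H_M$ and $H_C$, and this can fail for degenerate instances (e.g.\ disconnected interaction graphs or weight choices producing accidental degeneracies, where a unitary can commute with the total sums while mixing the would-be orbit blocks). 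So the connectivity or genericity hypothesis you flag is genuinely required, and without it your squeeze $\langle\Per(H_C)\rangle_{\mathrm{alg}}\subseteq\mathcal{C}(\mathfrak{g}_{PG})\subseteq\mathcal{C}(\mathfrak{g}_{FG})$ does not close. For context, the paper's own proof of this step is a one-sentence assertion that the FG and PG ansatzes ``follow the same symmetry''; your reformulation is a more precise diagnosis of where the mathematical content sits, but as a proof it stops at exactly the same unproven point.
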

 
We now begin to present the proof of Lemma~\ref{lem:DLA_dim_relation}.
\begin{proof}
    [Proof of Theorem~\ref{lem:DLA_dim_relation}.] Following Lemma~\ref{lem:rep_sum_subrep} with denoting $r$ be the natural representation of $\mathfrak{g}$ on vector space $V$, the dimension of DLA $\mathfrak{g}$ is equal to the sum of dimensions of irreducible subrepresentations, i.e., 
    \begin{equation}
        \dim(\mathfrak{g})=\dim(r)=\sum_{j=1}^k\dim(r_j)=\sum_{j=1}^k\dim(V_j),
    \end{equation}
    where $V_j$ is the irreducible invariant subspace related to the subrepresentation $r_j$. For the symmetric ansatz design $\mathcal{A}$, there exsits an invariant space $V_*\in\{V_j\}_{j=1}^k$ such that the effective dimension $d_{\eff}(\mathcal{A})=\dim(V_*)$.

    To obtain Eqn.~(\ref{eq:DLA_dim_relation}), we first show that the effective dimension of DLA $\mathfrak{g}$ is inversely proportional to the size of commutant of the DLA $\mathfrak{g}$, and then show that the commutant sizes related to ansatz design $\mathcal{A}_{FG}, \mathcal{A}_{PG}, \mathcal{A}_{NG}$ are monotonically non-increasing. In particular, the commutant of Lie algebra $\mathfrak{g}$, denoted as $\mathcal{C}(\mathfrak{g})=\{V\in SU(d): [V,g]=0\}$, includes all the symmetry operator of the corresponding ansatz design. For any two Lie algebras $\mathfrak{g}_1, \mathfrak{g}_2$ with $\mathcal{C}(\mathfrak{g}_1)\subset \mathcal{C}(\mathfrak{g}_2)$, then any block diagonalization of the elements in $\mathcal{C}(\mathfrak{g}_1)$ is also the block diagonalization of the elements in $\mathcal{C}(\mathfrak{g}_2)$. This indicates that any invariant subspace of $\mathcal{C}(\mathfrak{g}_1)$ is also the invariant subspace of $\mathcal{C}(\mathfrak{g}_2)$, leading to $\dim(\mathcal{C}_{j}(\mathfrak{g}_2))\le \dim(\mathcal{C}_{j}(\mathfrak{g}_1))$. Following Lemma~\ref{lem:comm_stru}, we have
    \begin{equation}
        d_{\eff}(\mathfrak{g}_2)=\dim(r_*(\mathfrak{g}_2))=\dim(\mathcal{C}_{*}(\mathfrak{g}_2))\le \dim(\mathcal{C}_{*}(\mathfrak{g}_1))=\dim(r_*(\mathfrak{g}_1))=d_{\eff}(\mathfrak{g}_1), \label{eq:g1_less_g2}
    \end{equation}
    where $*\in [k]$ refers to the index of invariant space the optimization performs on and $\dim(r_*(\mathfrak{g}_j))$ with $j=1,2$ refers to the effective dimension related to the DLA $\mathfrak{g}_j$.
    In conjunction with Lemma~\ref{lem:DLA_comm_relation} and Eqn.~(\ref{eq:g1_less_g2}),
    we have $\mathcal{C}(\mathfrak{g}_{NG})\subseteq \mathcal{C}(\mathfrak{g}_{PG})=\mathcal{C}(\mathfrak{g}_{FG})$ and hence $d_{\eff}(\mathfrak{g}_{FG}) = d_{\eff}(\mathfrak{g}_{PG}) \le d_{\eff}(\mathfrak{g}_{NG})$. This completes the proof.
\end{proof}

\subsection{Proof of Lemma~\ref{lem:rep_sum_subrep}}\label{subsec:rep_sum_subrep}
\begin{proof}
    [Proof of Lemma~\ref{lem:rep_sum_subrep}]    The first equality in Eqn.~(\ref{eq:dim_sum_subrep}) follows the fact that natural representation $r$ is bijective and does not change the dimension of pre-image space. the second equality follows the definition of the dimension of representation in Definition~\ref{def:rep_LA} such that
    \begin{equation}
        \dim(r)=\dim(V)=\dim(V_1\oplus \cdots \oplus V_k)=\sum_{j=1}^k\dim(V_j)=\sum_{j=1}^k\dim(r_j),
    \end{equation}
    where the last equality follows that $r_j$ is a representation of $\mathfrak{g}$ on the space $V_j$. This completes the proof.
\end{proof}

\subsection{Proof of Lemma~\ref{lem:DLA_comm_relation}}\label{subsec:DLA_comm_relation}
\begin{proof}[Proof of Lemma~\ref{lem:DLA_comm_relation}]
     We begin this proof by showing that the commutant of $A=\{H_1\otimes \mathbb{I},\mathbb{I} \otimes H_2\}$ is a subset of $B=\{H_1\otimes \mathbb{I}+ \mathbb{I}\otimes H_2\}$, where $H_1, H_2$ are arbitrary Hermitian operators and $B$ refers to the set with imposing parameter grouping on $A$. In particular, for any matrix $S$ which commutes with the elements in $A$, we have
     \begin{equation}
         S(H_1\otimes \mathbb{I}+ \mathbb{I}\otimes H_2)=S (H_1\otimes \mathbb{I})+ S(\mathbb{I}\otimes H_2)= (H_1\otimes \mathbb{I})S+ (\mathbb{I}\otimes H_2)S=(H_1\otimes \mathbb{I}+ \mathbb{I}\otimes H_2)S.
     \end{equation}
     This indicates that $\mathcal{C}(A)\subseteq \mathcal{C}(B)$.
     With this fact, we now derive the Eqn.~(\ref{eq:DLA_comm_relation}). We first recall that the generators of the Lie algebras $\mathfrak{g}_{NG},\mathfrak{g}_{PG},\mathfrak{g}_{FG}$ yield non-discreasingly restrictive parameters grouping strategy, and are identity when there is no spatial symmetry in the problem Hamiltonian, i.e., $\mathfrak{g}_{FG}=\mathfrak{g}_{PG}=\mathfrak{g}_{NG}$. Moreover, the definition of $\mathfrak{g}_{FG},\mathfrak{g}_{PG}$ indicates that the related ansatzes follow the same symmetry, namely, any unitary $U$ commutes with the elements in  $\mathfrak{g}_{FG}$ if and only if $U$ commutes with the elements in  $\mathfrak{g}_{PG}$. Hence we have $\mathcal{C}(\mathfrak{g}_{FG}) = \mathcal{C}(\mathfrak{g}_{PG})$ as the commutant consists of the symmetry operator of the ansatz design.

     On the other hand, the relation $\mathcal{C}(\mathfrak{g}_{PG}) \subseteq \mathcal{C}(\mathfrak{g}_{NG})$ in Eqn.~(\ref{eq:DLA_comm_relation}) directly following the analog between the set $A$ and $B$ and the generators related to the Lie algebra $\mathfrak{g}_{PG}$ and $\mathfrak{g}_{NG}$, where the generators related to $\mathfrak{g}_{PG}$ refers to the set with imposing parameters grouping on $\mathfrak{g}_{NG}$. This completes the proof.
\end{proof}


\section{Related work}\label{app:related_work}

In this section, we embark on a concise literature review, focusing on conventional algorithms for the Max-Cut problem, some variants of QAOA, and quantum circuit architecture search algorithms. This examination sets the stage for a comparative analysis between these established methods and our proposed model. In summary, our discussion underscores the distinctive strength of our model: its exceptional ability to generalize.

\subsection{Conventional algorithms}

\textbf{Greedy algorithm for Max-Cut problem.} The greedy algorithm for solving the Max-Cut problem operates on a simple principle: iteratively makes local, myopic decisions to construct a solution that attempts to maximize the sum of weights of edges between two disjoint subsets of vertices. This algorithm does not assure an optimal solution due to its greedy nature—making decisions based only on immediate benefits without considering future consequences. The detailed procedure is introduced in Alg.~\ref{alg:greedy}.

\begin{algorithm}
\caption{\textbf{Greedy Algorithm for weighted Max-Cut}}
\label{alg:greedy}
\begin{algorithmic}[1]
\STATE \textbf{Input:} A graph $G=(V,E)$ with weights $w_{ij}$ on edges $(i,j) \in E$
\STATE \textbf{Output:} A partition of $V$ into subsets $S$ and $\bar{S}$ maximizing the cut weight

\STATE Initialize $S = \emptyset$, $\bar{S} = V$
\STATE Initialize $cutWeight = 0$
\FOR{each vertex $v \in V$}
    \STATE $deltaWeight = 0$
    \FOR{each edge $(v, u) \in E$ connected to $v$}
        \IF{$u \in S$ and $v \notin S$ or $u \notin S$ and $v \in S$}
            \STATE $deltaWeight = deltaWeight - w(v,u)$
        \ELSE
            \STATE $deltaWeight = deltaWeight + w(v,u)$
        \ENDIF
    \ENDFOR
    \IF{$deltaWeight > 0$}
        \IF{$v \in S$}
            \STATE Move $v$ to $\bar{S}$ and update $cutWeight += deltaWeight$
        \ELSE
            \STATE Move $v$ to $S$ and update $cutWeight += deltaWeight$
        \ENDIF
    \ENDIF
\ENDFOR
\STATE \textbf{return} $S$, $\bar{S}$, $cutWeight$
\end{algorithmic}
\end{algorithm}

\textbf{Goemans-Williamson (GW) algorithm for Max-Cut problem.} The GW algorithm utilizes semidefinite programming to relax the original combinatorial problem into a continuous one that can be solved efficiently. After solving the semidefinite program, the algorithm uses a random hyperplane to split the vertices into two subsets, which form the cut. The GW algorithm achieves an approximation ratio of at least $0.878$ for the Max-Cut problem. The simplified pseudocode of GW algorithm is described in Alg.~\ref{alg:gw}.

\begin{algorithm}
\caption{\textbf{Goemans-Williamson Algorithm for Max-Cut}}
\label{alg:gw}
\begin{algorithmic}[1]
\STATE \textbf{Input:} A graph $G=(V,E)$ with weights $w_{ij}$ on edges $(i,j) \in E$
\STATE \textbf{Output:} A partition of $V$ into subsets $S$ and $\bar{S}$

\STATE Formulate the Max-Cut problem as a semidefinite programming (SDP) problem.
\STATE Solve the SDP problem to find a vector representation $\vec{v}_i$ for each vertex $i$.
\STATE Choose a random hyperplane by selecting a random unit vector $\vec{r}$.
\FOR{each vertex $i \in V$}
    \IF{$\vec{v}_i \cdot \vec{r} \geq 0$}
        \STATE Assign vertex $i$ to subset $S$
    \ELSE
        \STATE Assign vertex $i$ to subset $\bar{S}$
    \ENDIF
\ENDFOR
\STATE \textbf{return} $S$, $\bar{S}$
\end{algorithmic}
\end{algorithm}

\subsection{Variants of QAOA}
The studies of variants of QAOA aim to improve the convergence rate or reduce the computational time by changing the PQCs or the problem Hamiltonian. Current progress has revealed that the performance of QAOA could be improved by employing multi-angle QAOA \cite {herrman2022multi} where the parameters are no-grouped or partially grouped according to the permutation symmetry of problem Hamiltonian \cite{shaydulin2021exploiting,shi2022multiangle,sauvage2022building}, utilizing different mixer Hamiltonian obtained by searching from a given Hamiltonian pool \cite{zhu2022adaptive} or inspired by specific problem \cite{chalupnik2022augmenting,yu2022quantum,hadfield2019quantum,yoshioka2023fermionic} and other quantum algorithms \cite{chandarana2022digitized,wurtz2022counterdiabaticity,bartschi2020grover}. Another type of the variant of QAOA focuses on modifying the problem Hamiltonian, either through eliminating redundant qubits \cite{bravyi2020obstacles} to obtain a reduced problem Hamiltonian, or imposing conditional rotations \cite{villalba2021improvement} to the Hamiltonian. In the following, we delve into the most relevant variants of QAOA to our study and compare them with our model.

\textbf{Multi-Angle QAOA (ma-QAOA) \cite{herrman2022multi}.} The ma-QAOA innovates on the traditional QAOA framework by incorporating a larger set of parameters. It allows each operator within both the cost and mixer Hamiltonians to be governed by its own unique parameter, diverging from the conventional approach where a single parameter is shared among all operators. In our experiment, attention is focused exclusively on the modifications within the mixer Hamiltonian for fair comparison. The new mixer Hamiltonian is expressed as
\begin{equation}
    H_M=\sum_{i=1}^N\beta_iX_i.
\end{equation}
where $X_i$ denotes the Pauli-X operation applied to the $i$-th qubit and $\beta_i$ represents the corresponding individual parameter. This adjustment significantly expands the parameter space in ma-QAOA, scaling the total count from $2p$ in the standard QAOA to $(N+1)p$. Despite empirical evidence suggesting that ma-QAOA surpasses the original QAOA in achieving higher approximation ratios for configurations with fewer layers, the complexity introduced by the augmented parameter space could potentially impede its effectiveness in scenarios involving deeper circuits.

\textbf{ADAPT-QAOA \cite{zhu2022adaptive}.} In ADAPT-QAOA, the mixer Hamiltonian is selected from a pre-defined operator pool $\{A_j\}$ step by step. For the $k$-step, the operators $A_j$ is guided by maximizing the following gradient:
\begin{equation}
    -i\braket{\psi_{k-1}(\bm{\alpha},\bm{\beta})|e^{i\alpha_kH_C[H_C,A_j]e^{-i\alpha_kH_C}|\psi_{k-1}(\bm{\alpha},\bm{\beta})}},
\end{equation}
where $\ket{\psi_{p}(\bm{\alpha},\bm{\beta})}=(\prod_{k=1}^p e^{-i\beta_k A_k}e^{-i\alpha_k H_C})\ket{\psi_0}$. Following the selection of $A_j$, all parameters undergo a subsequent optimization phase. This procedure is iterated until the gradient's norm falls below a set threshold, or the circuit reaches its predefined maximum depth. ADAPT-QAOA's dynamic mixer Hamiltonian selection aims to potentially discover a more direct path to adiabaticity, thereby enabling accelerated convergence. However, its practicality for large-scale problems is hampered by the increased measurement costs required for gradient evaluation, a factor contingent on the size of the operator pool.

Contrasting with these QAOA variants, MG-Net uniquely offers a dynamic offline adaptation of the mixer Hamiltonian, tailoring it to the specific problem and circuit depth without incurring extra computational costs. Additionally, MG-Net demonstrates remarkable generalization capabilities, effectively learning from a limited dataset to address a broad spectrum of problems. This facilitates the rapid development of mixer Hamiltonians for new problems.

\subsection{Quantum circuit architecture search}

In the design of quantum circuits, quantum circuit architecture search methodologies have been developed to autonomously identify optimal quantum circuit architectures \cite{zhang2021neural,ye2021quantum,ostaszewski2021reinforcement,kuo2021quantum,meng2021quantum,du2022quantum,linghu2022quantum,he2022quantum,zhang2022differentiable,wu2023quantumdarts,lei2024neural,lu2023qas}. In the following, we delve into several notable approaches and contrast them with our MG-Net model.

\textbf{Quantum architecture search (QAS) \cite{du2022quantum}.} The QAS approach automatically seeks an optimal quantum circuit architecture to balance the benefits and side effects of adding more quantum gates, considering the noise in quantum systems. This method involves several steps: initializing a superstructure (supernet) that defines the pool of potential architectures, optimizing parameters across these architectures, ranking them based on performance, and finally refining the chosen architecture.

\textbf{Differentiable Quantum Architecture Search (DQAS) \cite{zhang2022differentiable}}. DQAS introduces a novel approach by employing differentiable programming techniques. This method enables the concurrent optimization of both the structure and parameters of quantum circuits through gradient descent, streamlining the search process.

\textbf{QuantumDARTS \cite{wu2023quantumdarts}.} The QuantumDARTS algorithm, which leverages the Gumbel-Softmax technique for differential optimization of quantum circuit structure and parameters, aims to reduce the search cost by following two search strategies: macro search for entire circuit optimization and micro search for sub-circuit structures, improving its adaptability to large-scale problems.

Despite their advancements, these QAS methodologies share a fundamental limitation: they are inherently designed to address singular, specific problems. Consequently, adapting these methods to new problems necessitates repeating the resource-intensive architecture search process from scratch. In contrast, MG-Net exhibits an unparalleled ability to generalize across a spectrum of problems based on a minimal set of training examples. This capability enables MG-Net to rapidly design optimal circuits for novel problems through a single feedforward computation, bypassing the need for repeated, exhaustive searches. This unique advantage positions MG-Net as a highly efficient and versatile tool in the quantum computing landscape, offering significant savings in computational resources and time.


\section{Implementation details of MG-Net}\label{app:implementation}

In this section, we initially outline the methodology for constructing datasets used to train MG-Net across various problem scales. Subsequently, we detail the implementation of the data encoder, illustrated with a specific example.

\subsection{Dataset construction}\label{app:implement_detail_dc}

\textbf{Operator types.} The set of operator types for the mixer Hamiltonian is defined as $\{X, Y\}^{\otimes N}$ in our experiments. Note that the operator type pool can be flexibly adjusted according to specific problems and hardware. For example, we can introduce two-qubit operators into
the operator type pool to further enhance the performance of QAOA, as done in \cite{zhu2020adaptive}. Considering the exponential growth of the search space in relation to the system size $N$, we have sampled only a subset from this pool in all our experiments. This approach is adopted to construct the training dataset while minimizing data collection costs.

\textbf{Construction of parameter group pool.} A straightforward idea to construct the pool of parameter group is to assume each $X_i$ can be assigned an index $j$ ranging from $1$ to $N$, leading to a pool $P=\{(j_1\in [N],...,j_N\in [N])\}$ with size $N^N$. However, there exist multiple duplicate candidates in the pool $P$ due to the disorder of the initial parameter pool. For example, for a two-qubit QAOA ansatz, parameter index vectors $(1,2)$ and $(2,1)$ make no difference in the optimization of QAOA. Based on these observations, we propose a recursive algorithm Alg.~\ref{alg:pool-construct} to build a compact pool of parameter groups.

\begin{algorithm}[H]
\caption{\small{\textbf{Construction of parameter group pool}}}
\label{alg:pool-construct}
   \begin{algorithmic}[1]
   \STATE \textbf{Input}: The qubit number $N$, pool $P=\{\}$ 
   \STATE \textbf{Output}: Pool $P$
   \STATE \textbf{Function} grouping\_pool($max\_index, index\_list, N$) 
        \STATE \hspace*{1em} \textbf{if} $\text{length}(index\_list) == N$
            \STATE \hspace*{2em} Add $index\_list$ to $P$
            \STATE \hspace*{2em} \textbf{return}
        \STATE \hspace*{1em} \textbf{end if}
        \STATE \hspace*{1em} \textbf{for} {$i = 1,\cdots,max\_index$}
            \STATE \hspace*{2em} Append $i$ to $index\_list$
            \STATE \hspace*{2em} grouping\_pool($\max(max\_index, index\_list[-1] + 2), index\_list, N$)
            \STATE \hspace*{2em} Delete the last element of $index\_list$
        \STATE \hspace*{1em} \textbf{end for}
   \STATE \textbf{End Function}
   \STATE grouping\_pool($2, \text{empty\_list}, N$)
\end{algorithmic}
\end{algorithm}

In practice, we randomly selected $5$ candidates from the parameter grouping pool for each operator type. Although the training dataset only partially covers the entire space of operator types and parameter groupings, our model is still capable of learning the intrinsic relationship between the mixer Hamiltonian and its corresponding achievable cost.

To find the minimal cost that can be achieved by a QAOA
circuit during the construction of the training dataset in stage 1, we run the same QAOA
circuit 10 times and record their cost values. For each run, the QAOA circuit
is initialized with different random parameters and optimized for 40 epochs. Finally,
the minimum of these cost values is selected as the label that represents the minimal
achievable cost.

\textbf{Large-scale dataset.} To assess our method's efficacy on large-scale problems, we concentrated on the Max-Cut problem using weighted graphs with $64$ nodes. Simulating larger-scale quantum circuits on classical devices poses significant challenges. To overcome this, our approach employs a divide-and-conquer strategy, simulating a large-scale circuit through multiple smaller-scale circuits. We then integrate the results of these smaller circuits to estimate the performance of the original large-scale circuit. For a detailed explanation of this methodology, refer to QAOA-in-QAOA \cite{zhou2023qaoa}.

In constructing the training dataset $D_{\rm ce}^{\rm Tr}$ for $64$-node graphs, we divide each $64$-node graph into $8$ sub-graphs, each containing $8$ nodes. The max-cut of each sub-graph is computed using an $8$-qubit QAOA. To gather a comprehensive range of samples, we vary the operator types and parameter groupings in the $8$-qubit circuits, which in turn simulates the variation in mixer Hamiltonians for $64$-qubit circuits. It is important to note that these $8$-qubit circuits operate independently, with no shared parameters, resulting in at least $8$ independent parameters for each $64$-qubit circuit in our training dataset. For testing on the unknown graphs, we employ tensor network simulations to accurately estimate the performance of the original $64$-qubit QAOA.

\begin{figure*}[htp]
    \centering
    \includegraphics[width=0.9\linewidth]{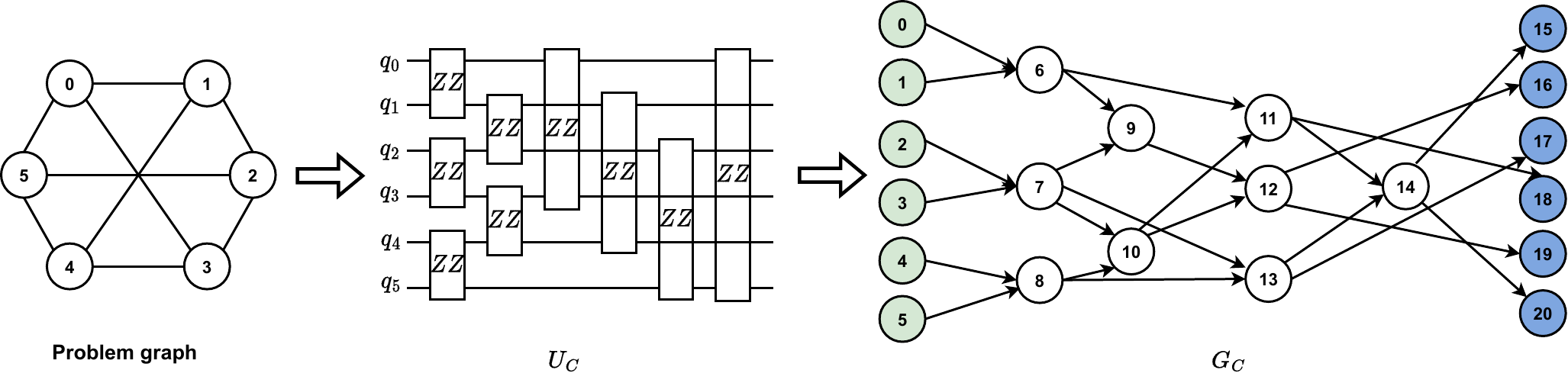}
    \caption{\small{\textbf{Encoding of problem.} The problem graph is first transformed into a quantum circuit, which is subsequently encoded by a DAG.}}
    \label{fig:encode_pg}
\end{figure*}

\subsection{Data encoder}\label{app:implement_detail_de}

\textbf{Problem encoder.} Our problem encoder is rooted on the problem Hamiltonian $H_C$ in Eqn.~(\ref{eq:para_state}). More precisely, to facilitate a consistent and unified representation for diverse combinatorial problems $\{G\}$, we initiate by converting the original problem $G$ into the corresponding unitary $U_C=\exp(-i\alpha_kH_C)$, which is subsequently transformed into a directed acyclic graph (DAG) $G_C$.

Fig.~\ref{fig:encode_pg} illustrates the problem encoding process for a regular graph with $6$ nodes. Each node of the problem graph corresponds to a qubit in the quantum system and each edge $(i,j)$ is represented as a two-qubit gate $Z_iZ_j$, which is exactly the problem Hamiltonian of QAOA for the Max-Cut problem. Based on this problem unitary, we construct the final graph representation $G_C$, with each two-qubit gate depicted as a node in the graph. In addition to these gate-induced nodes, two unique node types, the input and output nodes which correspond to qubits, are introduced to denote the start and end of $G_C$, respectively. The edges of $G_C$ signify the temporal order of quantum gate execution, linking consecutive gates and thereby dictating the flow of the quantum computation. The weights of edges are encoded into the node feature.

\begin{figure}[htp]
    \centering
    \includegraphics[width=0.4\linewidth]{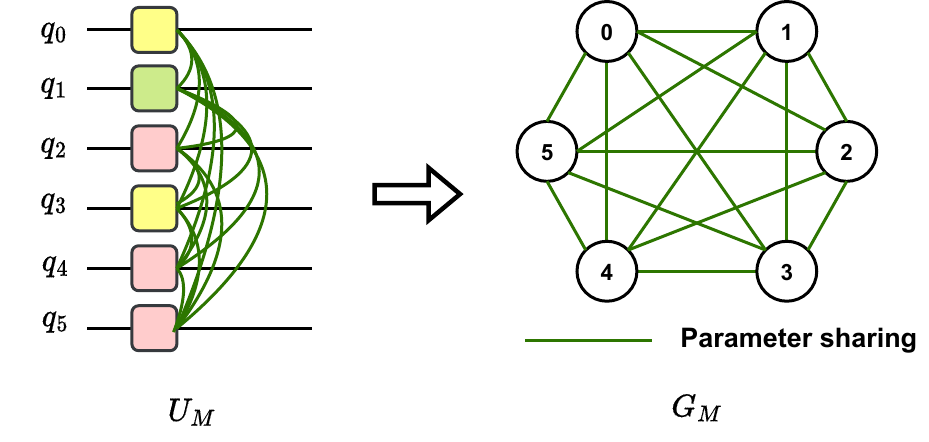}
    \caption{\small{\textbf{Encoding of mixer Hamiltonian.} Each qubit in the mixer Hamiltonian is represented as a node in the encoded graph. The type of operator associated with each qubit is encoded in the node feature, while the parameter grouping strategy is encapsulated in the edge features.}}
    \label{fig:encode_mg}
\end{figure}

\textbf{Mixer encoder.} We define a one-to-one mapping to encode the candidate mixer Hamiltonian $H_M$ as a graph $G_M$. Recall Eqn.~(\ref{eq:hm}), two types of information about $H_M$ should be encoded in $G_M$ are operators $\{P_i\}$ and the parameter grouping strategy $\mathcal{G}$. In MG-Net, each operator is modeled as a node of $G_M$, and the operator type is encoded as part of the node feature vector. Concretely, MG-Net initially constructs $G_M$ as a fully connected graph, where the edge weight is a binary variable, representing whether the two operators connected by the edge share the same control parameter.

The process of encoding a mixer Hamiltonian into a graph representation is illustrated in Fig.~\ref{fig:encode_mg}. Here, we take the example of a $6$-qubit mixer Hamiltonian encoded as graph $G_M$. In this graph, each qubit's corresponding operator is depicted as a node, with the operator acting on the $i$-th qubit represented by the $i$-th node in $G_M$. The graph's edges signify the parameter correlations among these operators. Specifically, let $w_{ij}\in \{0,1\}$ be the weight of edge connecting node $i$ and $j$. If the operator $i$ and $j$ share the same parameter, then $w_{ij}=0$; otherwise, $w_{ij}=1$. 

\textbf{Depth embedding.} The circuit depth $p$ is encoded as a vector $\bm{x}_p$ through position embedding \cite{vaswani2017attention}. Mathematically, $\bm{x}_p$ is constructed as
\begin{equation}
\begin{aligned}
    \bm{x}_p[2k]=\sin{\frac{p}{10000^{2k/d_p}}},\bm{x}_p[2k+1]=\cos{\frac{p}{10000^{2k/d_p}}},\nonumber
\end{aligned}
\end{equation}
where $d_p$ is dimension of $\bm{x}_p$ and $k=0, ..., \lfloor d_p/2 \rfloor$.

\subsection{Network structure}\label{app:net_arch}

\subsubsection{Cost estimator}

In our experimental setup, the intricate architecture of the cost estimator is detailed in Fig.~\ref{fig:ce_detail}. Both the problem and mixer Hamiltonian branches incorporate two layers of graph convolutions, utilizing ReLU activation functions to transform the initial node features from dimensions $d_C$ and $d_M$ to a unified $128$-dimensional space. Subsequently, the three extracted features—$\bm{x}_C$, $\bm{x}_M$, and $\bm{x}_p$—are concatenated to facilitate the prediction of the attainable minimum cost $\hat{y}$ for a given QAOA instance through an MLP layer.

\begin{figure}[htp]
    \centering
    \includegraphics[width=0.9\linewidth]{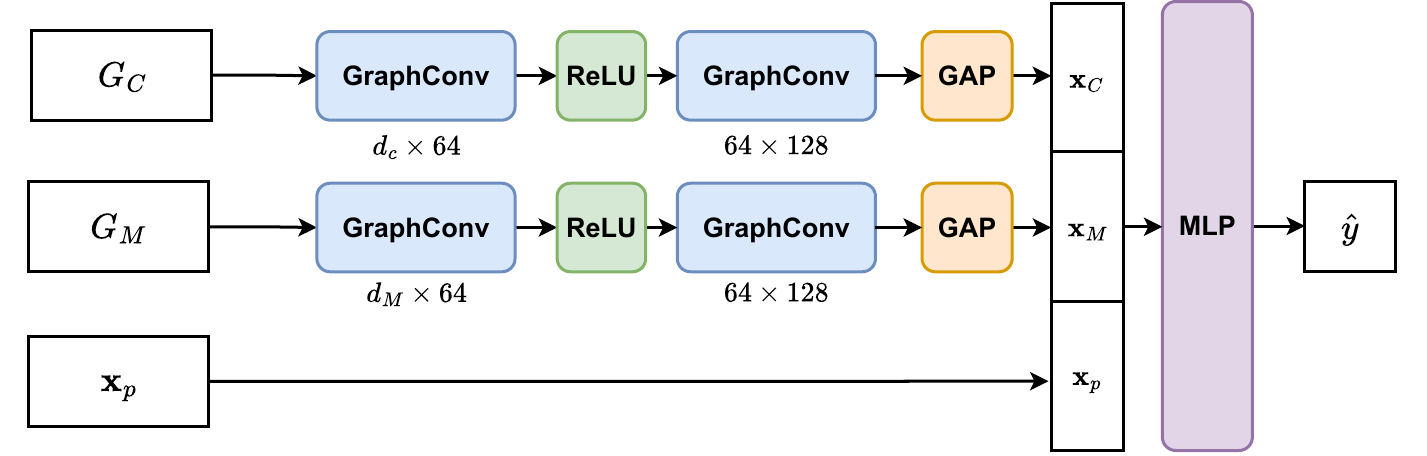}
    \caption{\small{\textbf{Implementation of cost estimator.} The term `GraphConv' represents the graph convolution module. `ReLU' is a commonly used activation function in neural networks. $d_C$ and $d_M$ represent the dimension of node feature in graph $G_C$ and $G_M$ respectively. $P_i$ represents the operator type for the $i$-qubit and $\bm{e}_{ij}$ represents the weight for edge $(i,j)$.}}
    \label{fig:ce_detail}
\end{figure}

\subsubsection{Mixer generator}

Inspired by \cite{qian2024multimodal} which encodes a quantum circuit as a graph, the mixer generation is composed of two separate sub-generators: the operator type generator and the parameter grouping generator, which are respectively responsible for graph node and link prediction.

\textbf{Operator type generator.} The task of generating operator types $\mathcal{P}$ is conceptualized as a graph node classification task. Specifically, we employ a GNN to process $G_C$, identifying output nodes to represent the operators corresponding to each qubit, while disregarding irrelevant nodes. To incorporate the circuit depth $p$ into the prediction, we enhance the feature set of each output node by appending a feature vector $\bm{x}_p$. This enriched node feature set is then fed into an MLP to predict the specific category of each operator.

\textbf{Parameter grouping generator.} Recall that the grouping strategy is traditionally represented by sets of index groups $\{\mathcal{G}_j\}_{j=1}^K$ with an unspecified $K$, posing a challenge for neural network processing. To address this, we extend the parameter grouping problem as follows: if an edge indicator $\bm{e}_{ij}=1$, then the mixer operators $P_i$ and $P_j$ are correlated and share the same parameter; otherwise, they are controlled by independent parameters. Furthermore, if $\bm{e}_{ij}=1$ and $\bm{e}_{ik,k\neq j}=1$, then $P_i$, $P_j$ and $P_k$ are correlated regardless of the value of $\bm{e}_{jk}$. In this way, the parameter grouping task is translated into the prediction of the binary variable $\bm{e}_{ij}\in \{0,1\}$, as a link prediction task. This modeling bypasses the need to predetermine the number of parameter groups and offers flexibility in incorporating constraints related to qubit connections.

Analogous to the operator type generator, the parameter grouping generator employs another GNN to process $G_C$ to extract features of output nodes, which are then extended with circuit depth feature $\bm{x}_p$. For node $i$ and $j$, their extended features $\bm{x}_i$ and $\bm{x}_j$ are used to determine the existence of an edge $(i,j)$ by evaluating $\bm{e}_{ij}=B({\rm MLP}(\bm{x}_i \circ \bm{x}_j))$, where $B(\cdot)$ signifies a binarization function. In MG-Net, this function is realized using the Gumbel-Softmax trick, ensuring the differentiability.

In our experiment, the detailed structure of the mixer generator is depicted in Fig.~\ref{fig:mg_detail}. The mixer generator integrates two specialized branches to analyze the input problem graph $G_C$, with each branch deploying two graph convolution layers to distill the feature vector $\bm{x}_C$ with a dimensionality of $128$. This feature vector is then augmented with the circuit depth feature $\bm{x}_p$ to enrich the predictive capability of the model. For the precise prediction of operator types $\{P_i\}_{i=1}^N$ applicable to each qubit, the terminal nodes of $G_C$ are chosen for input into a Multi-Layer Perceptron (MLP) layer. This step calculates the likelihood of each potential operator type. Concurrently, a separate MLP layer is employed to ascertain the parameter sharing between operators $P_i$ and $P_j$. This is achieved through the equation $\bm{e}{ij}={\rm MLP}(\bm{x}_i \circ \bm{x}_j)$, where $\circ$ denotes the element-wise multiplication, and $\bm{x}_i$ symbolizes the enriched feature of the $i$-th node. 

\begin{figure}[htp]
    \centering
    \includegraphics[width=0.9\linewidth]{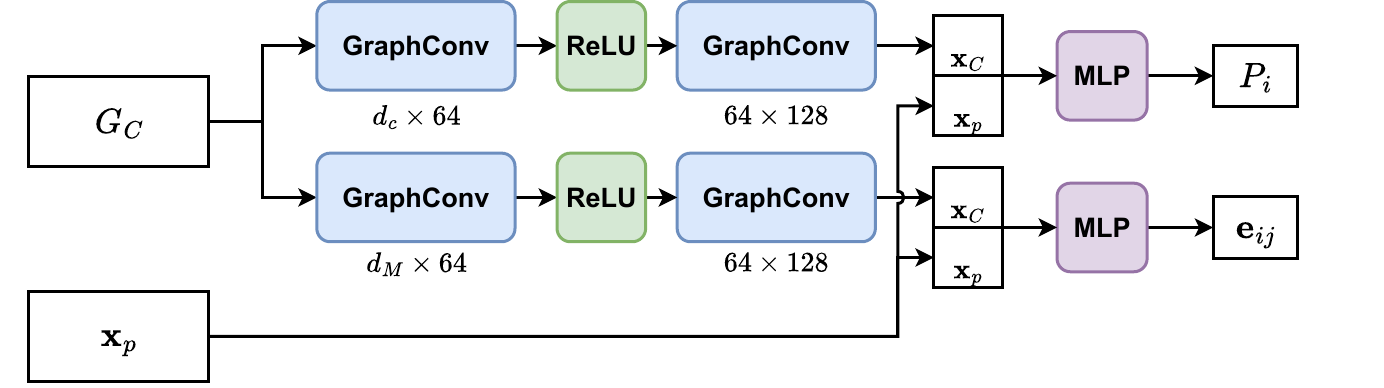}
    \caption{\small{\textbf{Implementation of mixer generator.} The term `GraphConv' represents the graph convolution module. `ReLU' is a commonly used activation function in neural networks. $d_C$ and $d_M$ represent the dimension of node feature in graph $G_C$ and $G_M$ respectively.}}
    \label{fig:mg_detail}
\end{figure}

\subsection{Experiment settings}

\textbf{Hardware platform.} All QAOA circuits are implemented by PennyLane \cite{bergholm2018pennylane} and run on classical device with Intel(R) Xeon(R) Gold 6267C CPU @ 2.60GHz and 128 GB memory. MG-Net is implemented by Pytorch \cite{paszke2019pytorch} and is trained on a single NVIDIA GeForce RT 2080Ti with 12G graphics memory.

\textbf{Hyper-parameters.} The hyper-parameters of optimizing MG-Net and QAOA circuit are listed in Tab.~\ref{tab:hyper-param}.

\textbf{Initial state.} The initial quantum state of the
QAOA circuit is consistently set to $\ket{+}^{\otimes N}$, irrespective of the mixer Hamiltonian chosen.
Although this approach does not ensure that the initial state is always the ground state
of the predicted mixer Hamiltonian, it does not compromise the QAOA’s performance
and has the potential to outperform the traditional state initialization technique, which can be partially explained by the physical intuition of counterdiabatic (CD) driving \cite{chandarana2022digitized,zhu2022adaptive}.

\begin{table}[htp]
    \centering
    \caption{\small{\textbf{The hyper-parameters of optimizing MG-Net and QAOA circuit.}}}
    \begin{tabular}{lcc}
        \toprule[1pt]
         & QAOA & MG-Net  \\
        \midrule
        optimizer & Adam & Adam \\
        learning rate & 0.15 & $1*10^{-4}$ \\
        epoch & 40 & 250 \\
        $\lambda_e$ & - & 1.0 \\
        $\lambda_r$ & - & 1.0 \\
        \bottomrule[1pt]
    \end{tabular}
    \label{tab:hyper-param}
\end{table}

\section{More numerical results}\label{app:more_result}

In this section, we initially show the results of comparing the approximation ratio achieved by different methods for TFIM. Then we examine how the approximation ratio achieved by various methods varies with different circuit depths $p$. Subsequently, we explore the convergence behavior of the QAOA when enhanced by our approach.

\subsection{Performance comparison among different methods for TFIM}\label{app:comp_tfim}

In evaluating the effectiveness of our proposed method for solving TFIM, we conducted a comparative analysis against QAOA, ADAPT-QAOA, and multi-angle QAOA (ma-QAOA). Our analysis, based on the average results from $100$ graphs in our test dataset, is summarized in Tab.~\ref{tab:tfim-ar}. The findings reveal that our method consistently outperforms other techniques in achieving a higher approximation ratio for TFIM, particularly in larger-scale problems.

\begin{table}[htp]
    \centering
    \caption{\small{\textbf{Comparison of approximation ratio $r$ among different methods for TFIM.}}}
    \begin{tabular}{lcc}
        \toprule[1pt]
        Method & $6$ qubits & $16$ qubits  \\
        \midrule
        QAOA & $0.990\pm  0.005$ & $0.523\pm 0.083$ \\
        ADAPT-QAOA & $0.857\pm 0.245$ & $0.742\pm 0.356$ \\
        ma-QAOA & $0.994\pm 0.001$ & $0.921\pm 0.040$ \\
        \midrule
        \textbf{Ours} & $\bm{0.996\pm 0.001}$ & $\bm{0.963\pm 0.031}$ \\
        \bottomrule[1pt]
    \end{tabular}
    \label{tab:tfim-ar}
\end{table}

\subsection{Experiments on asymmetric graphs and 2D-TFIM}

We conducted additional experiments on the asymmetric graphs of 6 nodes and 2D lattice models of $6$ spins. Their topological structure is shown in Fig.~\ref{fig:asym-2d}. 

\begin{figure}[htp]
    \centering
    \includegraphics[width=0.9\linewidth]{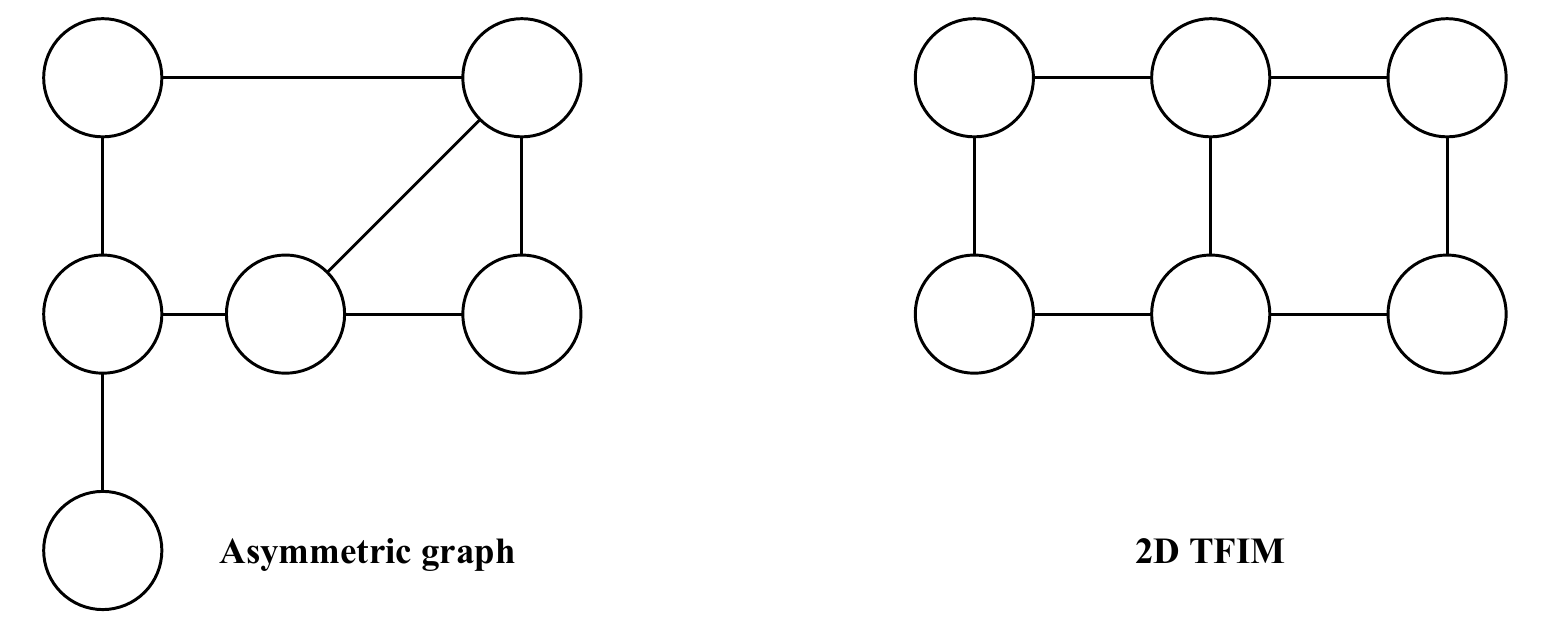}
    \caption{\small{\textbf{Topological structure of  asymmetric graphs and 2D TFIM.} }}
    \label{fig:asym-2d}
\end{figure}

The comparison of the achieved approximation ratio at $p=42$ over $100$ random test samples is summarized in the Tab.~\ref{tab:asym-2d}. The result affirms that our model consistently outperforms both standard
QAOA and ma-QAOA in terms of approximation ratio on more general cases.

\begin{table}[htp]
    \centering
    \begin{tabular}{lcc}
        \toprule[1pt]
        Tasks &  Max-Cut for asymmetric graphs & 2D TFIM  \\
        \midrule
        QAOA & $0.952\pm 0.026$ & $0.977\pm 0.008$   \\
        ma-QAOA & $0.987\pm 0.008$ & $0.980\pm 0.019$ \\
        \midrule
        \textbf{Ours} & $\bm{0.988\pm 0.005}$ & $\bm{0.988\pm 0.006}$  \\
        \bottomrule[1pt]
    \end{tabular}
    \label{tab:asym-2d}
\end{table}


\subsection{Approximation ratio with respect to $p$}

In small-scale quantum systems, achieving the criteria set in Theorem~\ref{thm:main_convergence} is more straightforward by increasing circuit depth $p$ beyond the threshold $C$. We analyze the approximation ratios achieved by $6$-qubit QAOA circuits for Max-Cut and TFIM within the $p$ range of $2$ to $82$. Figure~\ref{fig:ar_vs_p_n6} illustrates that at lower $p$ values, our method consistently records the highest approximation ratio $r$, clearly outperforming both standard QAOA and ma-QAOA. As $p$ increases from $2$ to $62$, standard QAOA and ma-QAOA exhibit a rise in $r$, eventually matching our method’s performance. However, a further increase in $p$ leads to a performance decline in ma-QAOA, where the detrimental impact of its numerous trainable parameters on convergence outweighs the benefits of enhanced expressibility. In contrast, our method maintains stable performance, continually achieving the highest $r$. These findings confirm our method's superiority in optimizing approximation ratios across various circuit depths compared to other approaches.

\begin{figure}[htp]
    \centering
    \includegraphics[width=0.9\linewidth]{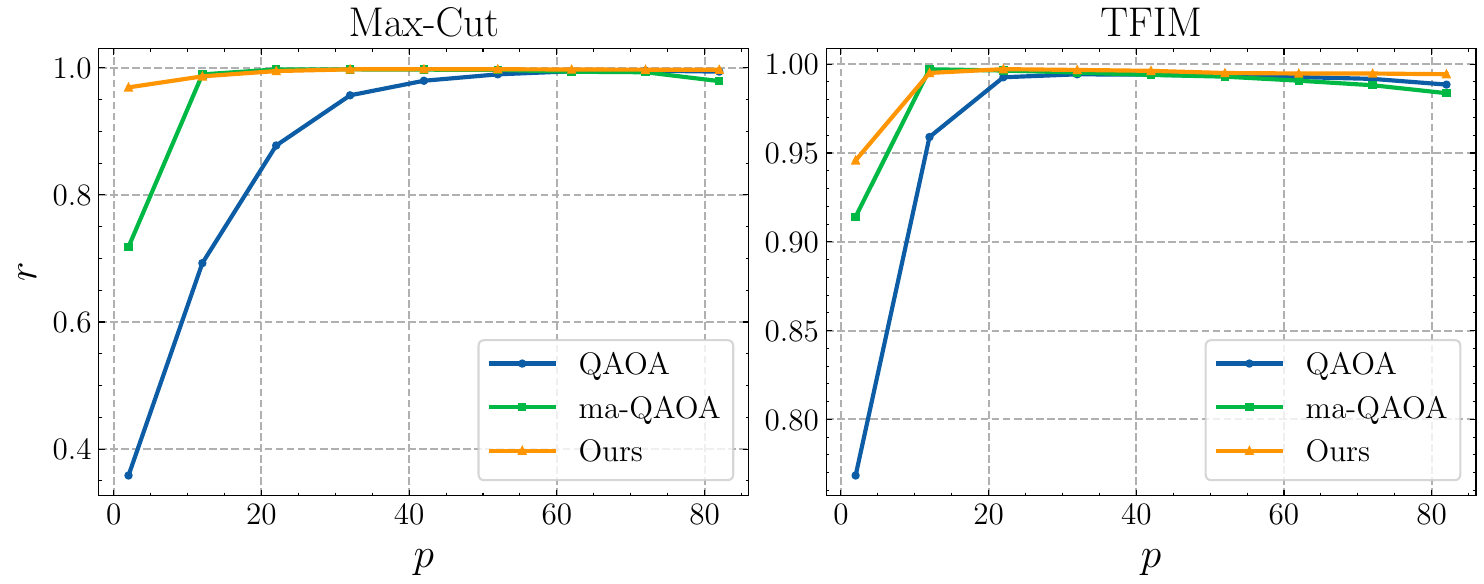}
    \caption{\small{\textbf{Comparison of the approximation ratio achieved by $6$-qubit QAOA, ma-QAOA and our model for Max-Cut and TFIM with varying $p$.} }}
    \label{fig:ar_vs_p_n6}
\end{figure}

We further explore the specific configurations of mixer Hamiltonians generated by MG-Net. Table~\ref{tab:mh} presents examples of predicted mixer Hamiltonians for $p$ values of ${12, 52, 82}$. At a smaller circuit depth of $p=12$, the optimal parameter grouping strategy maximizes the number of parameters, assigning each operator its independent parameter. This approach enhances the expressivity of the QAOA circuit and, alongside the introduction of novel mixer operators, contributes to superior approximation performance. For $p=52$, which verges on the threshold of over-parameterization, a trend towards grouping some operators is observed. At a higher circuit depth, such as $p=82$, the majority of operators are assigned the same parameter, aligning closer to the configuration of a standard QAOA circuit. The evolution of the mixer Hamiltonian configuration with varying $p$ partially reveals the underlying design principle of mixer Hamiltonian across different problems and circuit depths.
 
\begin{table*}[htp]
    \centering
    \caption{\small{\textbf{Operator type and parameter group generated by MG-Net.} `X' and `Y' represent Pauli-X and Pauli-Y, respectively. Parameter groups are formatted as $a_1-a_2-\cdots-a_N$, with $a_i \in \{0,1,...,N-1\}$ indicating the parameter index for the $i$-th operator. Identical indices ($a_i=a_j$) imply shared parameters between operators.}}
    \begin{tabular}{lccc}
        \toprule[1pt]
        Task & & Max-Cut & TFIM \\
        \midrule
        \multirow{2}{*}{$p=12$} & Operator type & YYYYXX & XXXXXX \\
        & Parameter Group & 0-1-2-3-4-5 & 0-1-2-3-4-5 \\
        \midrule
        \multirow{2}{*}{$p=52$} & Operator type & XXXXXX & XXXXXX \\
        & Parameter Group & 0-1-2-0-4-4 & 0-1-1-3-4-5 \\
        \midrule
        \multirow{2}{*}{$p=82$} & Operator type & XXXXXX & XXXXXX \\
        & Parameter Group & 0-1-0-0-0-1 & 0-0-0-0-0-0 \\
        \bottomrule[1pt]
    \end{tabular}
    \label{tab:mh}
\end{table*}

 \subsection{Convergence of QAOA with various mixer Hamiltonian}

In our investigation, we conducted an analysis on a randomly selected 16-qubit Max-Cut and TFIM problem from our test dataset, scrutinizing the convergence patterns of QAOA, ma-QAOA, and our method across various configurations ($p=4,6,8,10$). Illustrated in Fig.~\ref{fig:convergence_n16}, our methodology not only achieves a notably lower loss value within a reduced number of iterations in comparison to both QAOA and ma-QAOA but also consistently outperforms in terms of the final loss value attained by the end of the optimization. Specifically, at $p=10$, our approach necessitates merely $28$ iterations for Max-Cut and $22$ iterations for TFIM to diminish the loss value to $-8$ and $-15$, respectively. In contrast, ma-QAOA demands $40$ iterations for both challenges, whereas QAOA fails to achieve this loss value. This evidence underscores the superior efficiency and effectiveness of our method in navigating the solution landscape for these quantum optimization tasks.

\begin{figure}[htp]
    \centering
    \includegraphics[width=1.0\linewidth]{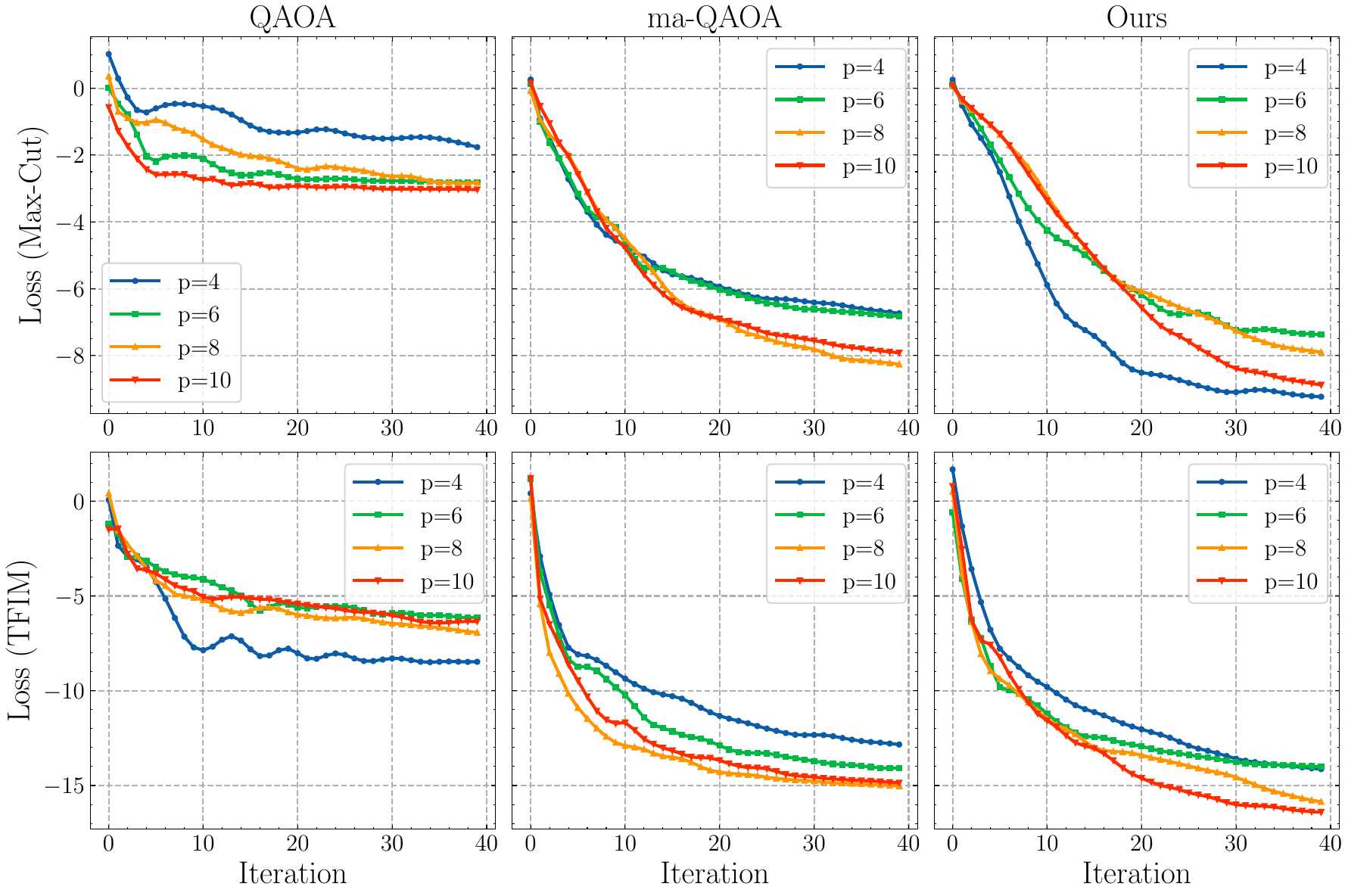}
    \caption{\small{\textbf{Comparison of the convergence of $16$-qubit QAOA, ma-QAOA and our model for Max-Cut and TFIM with varying $p$.} }}
    \label{fig:convergence_n16}
\end{figure}

\subsection{Experiments on extended candidate operator type set}

In this section, we investigate the performance of our model when applied to a more complex set of candidate operator types. Specifically, we expand the pool of mixer operator types from ${X,Y}$ to ${X,Y,XX,YY}$ by incorporating additional two-qubit operators, thereby increasing the search space for operator types to $O(4^N)$. All other experimental conditions remain consistent with those described in the main text. The behavior of the cost estimator under these conditions is illustrated in Fig.~\ref{fig:xx_yy}. Our results indicate that the cost estimator continues to serve as a reliable performance indicator for QAOA, even with the increased complexity of the mixer Hamiltonian design.

\begin{figure}[htp]
    \centering
   \includegraphics[width=0.7\linewidth]{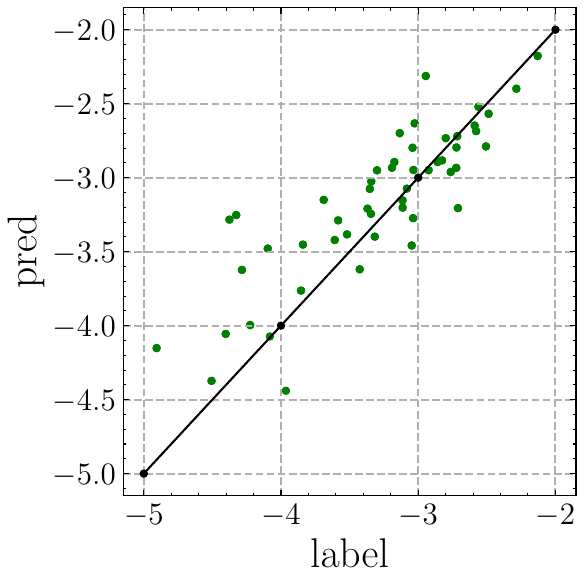}
    \caption{\small{\textbf{Behavior of cost estimator with extended mixer operator pool $\{X,Y,XX,YY\}$.} `label' represents the actual achieved approximation ratio, while `pred' represents the result predicted by the cost estimator.}}
    \label{fig:xx_yy}
\end{figure}

\subsection{Ablation study on the circuit depth embedding}

MG-Net acts as an initial protocol and provides a flexible circuit-generation framework where model components can be conveniently replaced by advanced techniques. Besides the position embedding of circuit depth in the main text, we have also considered another two embedding strategies: integer embedding and one-hot embedding. There are two key differences between the implementation of position encoding and one-hot or integer encoding:
\begin{enumerate}
    \item \textbf{Feature vector length.} The length of the one-hot-encoded vector $\mathbf{x}_p$ depends on the predefined maximum value of $p$, while the length of the integer-encoded vector $\mathbf{x}_p$ is 1. In contrast, we adjust the length of position-encoded vector $\mathbf{x}_p$ according to the dimension of $\mathbf{x}_C$ and $\mathbf{x}_M$.
    \item \textbf{Feature integration strategy.} When using one-hot or integer encoding, we employ concatenation as the integration strategy for the three features $\mathbf{x}_C$, $\mathbf{x}_M$ and $\mathbf{x}_p$ rather than summation.
\end{enumerate}

The achieved approximation ratios for 6-qubit MaxCut problems using different depth encoding methods are shown below:

\begin{table}[htp]
    \centering
    \caption{\small{\textbf{Comparison of approximation ratio $r$ among different circuit depth embedding strategies.}}}
    \begin{tabular}{lc}
        \toprule[1pt]
        Depth embedding method & Approximation ratio $r$  \\
        \midrule
        Integer & $0.981\pm 0.004$ \\
        One-hot & $0.984\pm 0.003$ \\
        Position & $0.99\pm 0.0004$ \\
        \bottomrule[1pt]
    \end{tabular}
    \label{tab:ar_depth_embed}
\end{table}

\end{document}